\newtheorem{thm}{Theorem}
\newtheorem{lem}{Lemma}
\DeclareMathOperator*{\argmin}{argmin}
\newcommand{\bb}[1]{\mathbb{#1}}
\renewcommand{\v}[1]{\boldsymbol{#1}}
\newcommand{\m}[1]{\mathrm{#1}}
\renewcommand{\c}[1]{\mathcal{#1}}
\newcommand{\idef}{\stackrel{\mathrm{def}}{=}}
\newcommand{\var}{\mathbb{V}\mathrm{ar}}
\begin{document}

\title{Accurate Computation of 
the Distribution of  Sums of Dependent Log-Normals with
Applications to the Black-Scholes Model}

\subtitle{Sums of Dependent Log-Normals}
\author{Zdravko~I.~Botev \and Robert Salomone\and Daniel Mackinlay}
				
				\institute{School of Mathematics and Statistics, The University of New South Wales,
        Sydney, NSW 2052, Australia
           \and
          Department of Mathematics, The University of Queensland, Brisbane, QLD 4072,  Australia\and School of Mathematics and Statistics, The University of New South Wales,
        Sydney, NSW 2052, Australia
}

\maketitle
\begin{abstract}
We present a new Monte Carlo methodology  for the accurate estimation of the distribution of the sum of dependent log-normal random variables. The methodology delivers  statistically unbiased estimators for three distributional quantities of significant  interest in finance and risk management:
the left tail, or cumulative distribution function; the probability density function; and the right tail, or complementary distribution function of the sum of dependent log-normal factors.  In all of these three cases our methodology delivers fast and highly accurate estimators in settings for which  existing methodology delivers estimators with large variance that tend to underestimate the true quantity of interest. We provide insight into  the computational challenges using  theory and numerical experiments, and explain their much wider implications  for Monte Carlo statistical estimators of  rare-event probabilities.    
In particular, we find that theoretically strongly-efficient  estimators 
should  be used with great caution in practice,
 because they may yield  inaccurate results  in the pre-limit. Further, this inaccuracy 
 may not be detectable from the output of the Monte Carlo simulation, because 
 the simulation output  may severely underestimate the
 true variance of the estimator. 

\end{abstract}

\keywords{Log-Normal distribution \and Rare event simulation \and 
logarithmic efficiency\and  large deviations\and Conditional Monte Carlo\and
Quasi Monte Carlo\and second-order efficiency}

\section{Introduction}
The distribution of the sum of log-normals (SLN) has numerous practical applications \cite{limpert2001log} ---
in the pricing of  Asian options under a Black-Scholes  model
 \cite{bacry2013log,dufresne2004log,milevsky1998asian}; in wireless systems in telecommunications
\cite{gubner2006new,rached2016unified} in insurance  value-at-risk computations \cite{embrechts2014academic,mcneil2015quantitative,zuanetti2006lognormal}; and recently even in the modelling of viral
 social media phenomena \cite{doerr2013lognormal}. For this reason, the accurate computation of characteristics of the SLN distribution  are receiving increasing attention.

The first 
left-tail efficient Monte Carlo method for the estimation of the SLN cumulative distribution function (cdf) was proposed 
by Gulisashvili  and Tankov \cite{gulisashvili2016tail}. This was then followed
by Asmussen et al. \cite{asmussen2014laplace,asmussen2016exponential,laub2016approximating} who approximate the cdf using Laplace transform techniques. Up until these seminal works,
the only available  approximations of the cdf were deterministic  moment-matching heuristics, whose accuracy quickly deteriorates beyond very few dimensions (see \cite{asmussen2016exponential} for a survey of these).  

With the exception of the seminal work \cite{gulisashvili2016tail}, all of the existing proposals can only deal with the distribution of the sum of independent log-normals, or, in the case of
\cite{laub2016approximating}, with the Laplace transform of the SLN distribution. 

Other examples of research in the area include the efficient estimation of the right tail of the SLN distribution under the assumption of  independent log-normal factors  \cite{asmussen2006improved,nguyen2014new}, and, of more consequence for practical applications, under the assumption of correlated factors \cite{asmussen2011efficient,asmussen2008asymptotics,kortschak2013efficient,kortschak2014second}. 
 
In this article, we present  new Monte Carlo estimators  for the cdf (distribution function), pdf (density function), and right tail (complementary distribution function) of SLN distribution. 
Regarding these three new proposals, our original contributions can be summarized as follows.

\paragraph{Cdf estimator.} 
We  propose a new asymptotically efficient estimator of the cdf with superior practical performance than its nearest competitor. We also show that under some mild conditions, the estimator is not only asymptotically efficient, but strongly efficient with vanishing relative error. This means  that its accuracy becomes better and better as we move further and further into the tail. In addition, while existing methods estimate and study either the left or 
right tails of the SLN distribution, our estimator is the first to also
estimate efficiently the distribution and density in a non-asymptotic 
regime (that is, in the main body of the distribution).    


\paragraph{Pdf estimator.}  Our novel estimator of the pdf of the SLN distribution is infinitely smooth in the model parameters. As a result of this, in a Quasi Monte Carlo setting, 
this smoothness  accelerates the rate of convergence beyond 
that of the canonical Monte Carlo rate of $\c O(\sqrt{n})$  ($n$ is the 
Monte Carlo sample size). This Quasi Monte Carlo acceleration is peculiar 
to our proposal only --- all other competing estimators are either not smooth, or cannot deal with  dependency.

\paragraph{Right tail estimator.}   We show both numerically and theoretically that many of the existing proposals for estimating the right tail of the SLN distribution \cite{asmussen1997simulation,asmussen2006improved,asmussen2011efficient,gulisashvili2016tail,kortschak2013efficient} can be unreliable in some simple examples of applied interest. 
More precisely, while the existing estimators work satisfactory when the log-normal variates are independent, these estimators exhibit exploding variance in cases of positively correlated  log-normals. 
Unfortunately, dependence structures which induce strong positive correlation are 
precisely the cases of practical interest in finance and reliability (the computation of such tail probabilities arises, for example, in estimating the likelihood of a large loss from a portfolio with asset prices driven by the Black-Scholes geometric Brownian motion model \cite[Chapter 15]{kroese2011handbook}). 

 In addition to proving that our estimator is  asymptotically efficient as we move deeper and deeper into the right tail, we show that,  in
a number of practical settings,  it is  more accurate than its competitors by many orders of magnitude.

Further to this, we provide a refinement of the tail asymptotics of the lognormal distribution (item 1 of Lemma~\ref{lem}), and use this refinement to prove that our estimator is second-order efficient.
A second-order efficient estimator is one whose precision or standard error can be estimated reliably from simulation, a property only enjoyed by our new estimator (Corollary~\ref{cor:var}).

Finally, it is frequently the case that we not only wish to estimate the probability of a rare-event, but also wish to draw random states conditional on the rare-event. In this article  we propose the first  sampler
for simulating from the SLN distribution conditional on
a left-tailed rare event. A notable feature of our sampling algorithm is that the simulation is not approximate (as in Markov chain Monte Carlo), but exact.

\paragraph{Wider Implications}
One important conclusion
 with wider implications is that 
 a theoretically strongly-efficient  estimator may perform poorly in practice, and a weakly-efficient estimator might be preferable in the pre-limit. 
Our suggested solution to this problem 
is to use estimators with guaranteed second-order efficiency.  In this way,  any potential shortcomings of 
the  estimator are more likely to  become  evident during the course of the Monte Carlo simulation.

%

The rest of the paper is organized around the 
three qualitatively different parts of the SLN distribution: 
 (1)  the left tail of the SLN distribution;
(2) the density of the main body of the distribution; (3) the right tail of SLN distribution. 
In all three cases we wish to control the (quasi) Monte Carlo error of the estimator.

The left tail and main body is covered in Section~\ref{sec:cdf}, and
the right tail is considered in  Section~\ref{sec:right-tail}. In Section~\ref{sec:ISVE} we review
the \emph{importance sampling vanishing error} (ISVE) estimator proposed in
\cite{asmussen2011efficient}, and show numerically how in some cases it may yield highly inaccurate estimates that tend to severely underestimate the true probability. We give some intuitive explanations for the poor performance of the estimator and then in Section~\ref{sec:new}  describe our novel estimator and its theoretical properties. This is followed by   numerical illustrations of the main theoretical findings, and a concluding section.

\section{Cumulative Distribution and Density}
\label{sec:cdf}
We start by considering the cumulative  distribution function of the SLN:
\[
\ell(\gamma)=\bb P(X_1+\cdots+X_d\leq\gamma),
\]
where: (1) $\v X$ are (dependent) log-normal random variables governed by  a Gaussian copula, so that $\ln\v X\sim \mathsf{N}(\v\nu,\Sigma)$ for some positive definite covariance matrix $\Sigma$; and (2) the parameter  $\gamma>0$ is allowed to  be a  small enough threshold so that $\ell$ is a small or rare-event probability.

Then, if 
$\m L\m L^\top=\Sigma$ is the lower triangular decomposition of the covariance matrix, we can write
\[\textstyle
\ell=\bb P(\exp(\nu_1+\m L_{11}Z_1)+\cdots+\exp(\nu_d+\sum_{j}\m L_{dj}Z_j)\leq\gamma),
\]
where under the measure $\bb P$, we have $\v Z\sim\mathsf{N}(\v 0,\m I)$.
In other words, under $\bb P$ we can set 
$X_k=\exp\left(\nu_k+\sum_{i\leq k}\m L_{ki}Z_i\right)$, which we henceforth assume. To proceed, note that the events 
\[
\{X_1\leq \gamma\}\supseteq \{X_1+X_2\leq \gamma\}\supseteq\cdots\supseteq \{X_1+\cdots+X_d\leq \gamma\}
\]
are nested. In other words, if we define 
\[
\alpha_j(z_1,\ldots,z_{j-1})\idef
  \frac{\ln(\gamma-\sum_{k<j}x_k)-\nu_j-\sum_{k<j}\m L_{jk}z_k}{\m L_{jj}} ,\quad j > 1,
\]
then, the following events are nested:
\[
\{Z_1\leq \alpha_1\}\supseteq \{Z_2\leq \alpha_2(Z_1)\}\supseteq\cdots\supseteq \{Z_d\leq \alpha_d(Z_1,\ldots,Z_{d-1})\},
\]
with the last one being equivalent to the event of interest. 

Let $\mathsf{TN}_{(l,u)}(\mu,\sigma^2)$ denote the normal
distribution $\mathsf{N}(\mu,\sigma^2)$ truncated to the interval 
$(l,u)$. 
The above nested sequence of events then suggests that the following
 sequential simulation of $\v Z$ will entail the occurrence of the  (possibly rare) event\footnote{We denote the standard normal pdf with covariance $\Sigma$  via $\phi_\Sigma(\cdot)$ ($\phi(\cdot)\equiv \phi_{\m I}(\cdot)$) and the univariate cdf and complementary cdf by
$\Phi(\cdot)$ and $\overline\Phi(\cdot)$, respectively.}:
\begin{align*}
Z_1&\sim \frac{\phi(z_1)\bb I\{z_1\leq \alpha_1\}}{\Phi(\alpha_1)}\equiv \mathsf{TN}_{(-\infty,\alpha_1)}(0,1)\\
Z_2|Z_1&\sim \frac{\phi(z_2)\bb I\{z_2\leq \alpha_2(z_1)\}}{\Phi(\alpha_2(z_1))}\equiv \mathsf{TN}_{(-\infty,\alpha_2)}(0,1)\\
&\vdots\\
\hspace{-1cm}Z_d|Z_1,\ldots,Z_{d-1}&\sim \frac{\phi(z_d)\bb I\{z_d\leq \alpha_d(z_1,\ldots,z_{d-1})\}}{\Phi(\alpha_d(z_1,\ldots,z_{d-1}))}\equiv \mathsf{TN}_{(-\infty,\alpha_d)}(0,1)
\end{align*}
Denote the measure used to simulate $\v Z$  as $\bb P_{\v 0}$ and  the corresponding expectation (variance) operators as $\bb E_{\v 0}$ ($\var_{\v 0}$). 
With the above sampling scheme, the unbiased importance sampling estimator of $\ell$ (based on a single realization) is:
\begin{equation}
\label{simple}
\hat\ell_{\v 0}=\prod_{j=1}^d \Phi(\alpha_j(Z_1,\ldots,Z_{j-1})),\qquad \v Z\sim \bb P_{\v 0}
\end{equation}
Under the condition  that $\Sigma_{ii}<\Sigma_{ij}$ for some $i\not=j$ (see \cite{rached2017efficient}), the estimator \eqref{simple} is \emph{strongly efficient}, and thus 
 preferable to the Gulisashvili and Tankov  (GT) estimator \cite[Equation (65)]{gulisashvili2016tail}, which is only \emph{logarithmically efficient}. 
 The efficiency label stems from the fact that the relative error, $\var(\hat\ell_\mathrm{CMC})/\ell^2$, of the crude Monte Carlo estimator, 
\begin{equation*}
\label{CMC}
\hat\ell_\mathrm{CMC}=\bb I\{X_1+\cdots+X_d\leq \gamma\},\quad \ln\v X\sim\mathsf{N}(\v\nu,\Sigma),
\end{equation*}
 grows exponentially (in $\gamma$), while the relative  of the  GT estimator grows polynomially, and the relative  error of  \eqref{simple} decays to zero   as $\gamma\downarrow 0$, under the condition of Theorem~\ref{theorem:VRE left}.
\begin{thm}[Vanishing Relative Error]
\label{theorem:VRE left}
Suppose there exists an index $i$ such that $\Sigma_{ii}<\Sigma_{ij}$ for
all $i\not =j$, and, without loss of generality, assume that $i=1$.
Then, the estimator \eqref{simple}  enjoys the vanishing relative error property:
\[
\lim_{\gamma\downarrow 0}\frac{\var_{\v 0}\hat\ell_{\v 0}(\gamma)}{\ell^2(\gamma)}= 0
\]

\end{thm}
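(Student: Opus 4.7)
The plan is to exploit the unbiasedness of $\hat\ell_{\v 0}$ to recast the relative variance as a ratio of moments of a bounded random factor, and then show that this factor concentrates at the constant $1$ as $\gamma\downarrow 0$. First I would factor $\hat\ell_{\v 0}=\Phi(\alpha_1)\cdot R$, where $R:=\prod_{j=2}^d\Phi(\alpha_j(Z_1,\ldots,Z_{j-1}))$ is random under $\bb P_{\v 0}$ while $\Phi(\alpha_1)$ is deterministic in $\gamma$. Unbiasedness $\bb E_{\v 0}\hat\ell_{\v 0}=\ell$ then gives
\[
\frac{\var_{\v 0}\hat\ell_{\v 0}(\gamma)}{\ell^2(\gamma)}=\frac{\bb E_{\v 0}[R^2]}{(\bb E_{\v 0}R)^2}-1,
\]
and since $0<R\leq 1$, dominated convergence reduces the whole claim to showing that $R\to 1$ in $\bb P_{\v 0}$-probability as $\gamma\downarrow 0$. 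This in turn follows from establishing $\alpha_j(Z_1,\ldots,Z_{j-1})\to+\infty$ in $\bb P_{\v 0}$-probability for every $j=2,\ldots,d$.

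Second, I would translate the hypothesis using $\Sigma_{1j}=L_{11}L_{j1}$ (since $\m L$ is lower triangular), so that $\Sigma_{11}<\Sigma_{1j}$ is equivalent to $L_{11}<L_{j1}$, with $L_{11}>0$ by positive definiteness. Next I would analyse the limiting behaviour of $Z_1,\ldots,Z_{j-1}$ under $\bb P_{\v 0}$. Mill's ratio applied to $Z_1\sim\TN_{(-\infty,\alpha_1)}(0,1)$ with $\alpha_1=(\ln\gamma-\nu_1)/L_{11}\to-\infty$ yields $W_1:=\alpha_1-Z_1=O_p(1/|\alpha_1|)$, with $|\alpha_1|W_1$ asymptotically standard exponential; an induction then shows that once $\alpha_k\to+\infty$ in probability the truncation becomes inactive and the corresponding $Z_k=O_p(1)$ for $k\geq 2$. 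Substituting these, $X_1=\gamma e^{-L_{11}W_1}$ lives on the scale $\gamma$, whereas each $X_k$ ($k\geq 2$) scales like $\gamma^{L_{k1}/L_{11}}=o_p(\gamma)$ because the exponent strictly exceeds $1$; hence $\ln(\gamma-\sum_{k<j}X_k)=\ln\gamma+O_p(\ln|\ln\gamma|)$, and similarly $\sum_{k<j}L_{jk}Z_k=L_{j1}\alpha_1+O_p(1)$. Assembling gives
\[
\alpha_j=\frac{(1-L_{j1}/L_{11})\ln\gamma+O_p(\ln|\ln\gamma|)}{L_{jj}},
\]
so that $L_{j1}>L_{11}$ combined with $\ln\gamma\to-\infty$ makes the leading term diverge to $+\infty$.

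The main obstacle I expect is the rigorous control of the logarithmic correction $\ln(\gamma-\sum_{k<j}X_k)$: since $W_1$ is typically of order $1/|\ln\gamma|$ under $\bb P_{\v 0}$, the quantity $\gamma-X_1$ can be as small as $\gamma/|\ln\gamma|$, yielding a correction of size $\ln|\ln\gamma|$ that must be shown to be genuinely subleading relative to $\ln\gamma$. Making this quantitative requires the (approximately exponential) tail of $W_1$ at scale $1/|\alpha_1|$ together with the strict inequality $L_{k1}/L_{11}>1$ for $k\geq 2$, which jointly ensure that no $X_k$ with $k\geq 2$ can perturb the leading behaviour of $\gamma-\sum_{k<j}X_k$. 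A secondary, easier subtlety is the induction itself: each step requires that $\alpha_k\to+\infty$ in probability for the previous $k<j$ before the argument for the current $j$ can run, but the very same asymptotic expansion, applied at lower indices, supplies this, so the induction closes.
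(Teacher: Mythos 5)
Your argument is correct, but it follows a genuinely different route from the paper's. The paper's proof is essentially two lines: since every factor $\Phi(\alpha_j)\le 1$ and $\alpha_1$ is deterministic, $\bb E_{\v 0}\hat\ell_{\v 0}^2\le \Phi^2(\alpha_1)=[\bb P(X_1<\gamma)]^2$, and then the cited left-tail asymptotic $\ell(\gamma)\simeq \bb P(X_1<\gamma)$ from Gulisashvili--Tankov (valid exactly under $\Sigma_{11}<\Sigma_{1j}$ for $j\ne 1$) immediately gives $\var_{\v 0}\hat\ell_{\v 0}/\ell^2\le [\bb P(X_1<\gamma)]^2/\ell^2(\gamma)-1\to 0$. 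You instead factor $\hat\ell_{\v 0}=\Phi(\alpha_1)R$, reduce the claim to $\bb E_{\v 0}R^2/(\bb E_{\v 0}R)^2\to 1$, and prove $R\to 1$ in $\bb P_{\v 0}$-probability by a concentration-plus-induction analysis of the sequential sampler: Mill's-ratio concentration of $Z_1$ at $\alpha_1$ (so $X_1=\gamma e^{-L_{11}W_1}$ with $W_1=O_p(1/|\ln\gamma|)$), the translation $\Sigma_{11}<\Sigma_{1j}\Leftrightarrow L_{11}<L_{j1}$, the resulting bound $X_k=\gamma^{L_{k1}/L_{11}}e^{O_p(1)}=o_p(\gamma-X_1)$ for $k\ge 2$, and the expansion $\alpha_j=[(1-L_{j1}/L_{11})\ln\gamma+O_p(\ln|\ln\gamma|)]/L_{jj}\to+\infty$; since $0<R\le 1$, bounded convergence (together with $\bb E R^2\ge(\bb E R)^2$) closes the argument. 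Your route is heavier but self-contained: it does not invoke the external asymptotic, and in fact re-derives it as a byproduct, since $\bb E_{\v 0}R\to 1$ yields $\ell(\gamma)=\Phi(\alpha_1)\bb E_{\v 0}R\simeq\bb P(X_1<\gamma)$; it also correctly isolates the one delicate point (the $\ln|\ln\gamma|$-sized correction from $\gamma-X_1\asymp\gamma W_1$, handled via the lower tail of the asymptotically exponential variable $|\alpha_1|W_1$ and the strict gaps $L_{k1}/L_{11}>1$, all of which only need high-probability control since the target is convergence in probability). The paper's approach buys brevity by outsourcing the hard step to the cited reference; yours buys transparency about why the sequential change of measure concentrates, at the cost of the truncated-normal estimates and the induction over $j$.
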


Although estimator \eqref{simple} can enjoy the best possible efficiency behavior, it is not necessarily  efficient when $\Sigma$ does not satisfy the condition in Theorem~\ref{theorem:VRE left}.
 To achieve  asymptotic efficiency for any $\Sigma$, we instead suggest the following parametric change of measure for $\v Z$, where the parameter $\v\mu$ still remains to be determined:
\begin{align*}
Z_1&\sim  \mathsf{TN}_{(-\infty,\alpha_1)}(\mu_1,1)\\
Z_2|Z_1&\sim \mathsf{TN}_{(-\infty,\alpha_2)}(\mu_2,1)\\
&\vdots\\
Z_d|Z_1,\ldots,Z_{d-1}&\sim\mathsf{TN}_{(-\infty,\alpha_d)}(\mu_d,1)
\end{align*}
Denote the measure used to simulate $\v Z$  as $\bb P_{\v\mu}$ and  the corresponding expectation (variance) operators as $\bb E_{\v\mu}$ ($\var_{\v\mu}$). 
Let the logarithm of the Radon-Nikodym derivative, $\m d\bb P/\m d\bb P_{\v\mu}$, be denoted as
\[
\begin{split}
\psi(\v z;\v\mu)&\idef \frac{\|\v\mu\|^2}{2}-\v z^\top\v\mu +\sum_{j=1}^d \ln\overline\Phi(\mu_j-\alpha_j(\v z)),
\end{split}
\]
and let $\c W=\{\v w: \v w\geq \v 0, \v 1^\top \v w=1\}$ denote the set of discrete probability distributions on $\bb R^d$.
 Then, our proposed unbiased estimator is 
\begin{equation}
\label{condest}
\hat\ell=\exp(\psi(\v Z;\v\mu^*)),\qquad \v Z\sim \bb P_{\v\mu^*}
\end{equation}
 where $\v\mu^*$ is  the solution to the  program: 
\begin{equation}\label{eq:optprogram}
(\v w^*,\v\mu^*)=\argmin_{\v w\in \c W,\v\mu} \left\{\|\v\mu\|^2+\ln
\overline\Phi\left( \frac{\v w^\top(\v\nu-\m L\v\mu)-\ln\gamma-\v w^\top\ln \v w}{\sqrt{\v w^\top\Sigma\v w}}\right)\right\}
\end{equation}
Why is  \eqref{condest} a good estimator? In addition to its  superior numerical performance (see Section~\ref{sec:GT}) compared to  the  Gulisashvili and Tankov  (GT) estimator \cite[Equation (65)]{gulisashvili2016tail}, it is also a logarithmically efficient estimator as $\gamma\downarrow 0$. This is formally stated in the
following theorem, which is proven in the appendix. 
\begin{thm}[Logarithmic Efficiency of Estimator]
\label{theorem:left tail}
The estimator \eqref{condest} is logarithmically efficient, that is,
\[
\liminf_{\gamma\downarrow 0} \frac{\ln\bb E_{\v\mu^*}\hat\ell^2(\gamma)}{\ln\ell(\gamma)}=2.
\]
with relative error
\footnote{The notation $f(x)\simeq g(x)$ as $x\rightarrow a$
stands for $\lim_{x\rightarrow a}f(x)/g(x)=1$. Similarly, we define $f(x)=\c O(g(x))\Leftrightarrow \lim_{x\rightarrow a}|f(x)/g(x)|<\mathrm{const.}<\infty$; $f(x)=o(g(x))\Leftrightarrow\lim_{x\rightarrow a}f(x)/g(x)=0$; also,
$f(x)=\Theta(g(x))\Leftrightarrow f(x)=\c O(g(x))\textrm{ and } g(x)=\c O(f(x))$. 
} that grows as $
\frac{\bb E_{\v\mu^*}\hat\ell^2(\gamma)}{\ell^2(\gamma)}= \c O( (-\ln\gamma)^{(d+1)}).
$
\end{thm}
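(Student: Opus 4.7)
The plan is to reduce the liminf identity to the polynomial relative error bound. Because the large-deviations rate for log-normal tails gives $-\ln\ell(\gamma)=\Theta((\ln\gamma)^2)$, a bound $\bb E_{\v\mu^*}\hat\ell^2(\gamma)/\ell^2(\gamma)=\c O((-\ln\gamma)^{d+1})$ implies $\ln\bb E_{\v\mu^*}\hat\ell^2\leq 2\ln\ell+(d+1)\ln(-\ln\gamma)+\c O(1)$, and dividing by $\ln\ell\to-\infty$ yields $\liminf\geq 2$. The matching $\limsup\leq 2$ is immediate from Jensen's inequality, since $\bb E_{\v\mu^*}\hat\ell^2\geq(\bb E_{\v\mu^*}\hat\ell)^2=\ell^2$ and $\ln\ell<0$ for small $\gamma$. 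So the whole task reduces to the polynomial relative error bound.

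For this, I would begin with the change of measure
\[
\bb E_{\v\mu^*}\hat\ell^2(\gamma)=\bb E_{\bb P}\!\left[\exp\bigl(\psi(\v Z;\v\mu^*)\bigr)\,\bb I\{A\}\right],\qquad A=\{X_1+\cdots+X_d\leq\gamma\},
\]
and aim for a pathwise estimate $\exp(\psi(\v z;\v\mu^*))\leq C\,(-\ln\gamma)^{d+1}\,\ell(\gamma)$ valid on $A$; integrating $\bb I\{A\}$ under $\bb P$ then immediately delivers the claim. Substituting the refined Mills expansion $\ln\overline\Phi(x)=-x^2/2-\ln x+\c O(1)$ as $x\to\infty$ (item 1 of Lemma~\ref{lem}) into each of the $d$ summands $\ln\overline\Phi(\mu_j^*-\alpha_j(\v z))$ of $\psi$, and combining with the AM--GM bound $\sum_k x_k\geq\exp(\v w^{*\top}\ln\v x-\v w^{*\top}\ln\v w^*)$ (which on $A$ yields the linear constraint $\v w^{*\top}\m L\v z\leq \ln\gamma-\v w^{*\top}\v\nu+\v w^{*\top}\ln\v w^*$), the quadratic terms in $\v z$ collapse, after completing the square, into twice the optimum of~\eqref{eq:optprogram} attained at $(\v w^*,\v\mu^*)$. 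The $d$ logarithmic Mills corrections together with one further log-factor from the scaling of $\overline\Phi$ produce the $(-\ln\gamma)^{d+1}$ overhead.

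To identify this optimum with $\ln\ell(\gamma)$ at leading order, I would note that the same Jensen/AM--GM inequality at $\v\mu=\v 0$ already yields the upper bound $\ell(\gamma)\leq\overline\Phi\bigl((\v w^\top\v\nu-\ln\gamma-\v w^\top\ln\v w)/\sqrt{\v w^\top\Sigma\v w}\bigr)$; a matching lower bound of the same asymptotic order is available from the multivariate log-normal LDP. Consequently the optimum of~\eqref{eq:optprogram} captures the exponential decay rate of $\ell(\gamma)$ up to subexponential (in $-\ln\gamma$) corrections, and the pathwise upper bound then translates into the desired $\c O((-\ln\gamma)^{d+1})\cdot\ell^2(\gamma)$ estimate for the second moment.

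The main obstacle is the tightness of the pathwise bound in directions of $\v z$ orthogonal to $\m L^\top\v w^*$: the AM--GM step only constrains the one-dimensional projection $\v w^{*\top}\m L\v z$, so the residual components must be controlled entirely by the quadratic $\|\v\mu^*\|^2/2-\v z^\top\v\mu^*$ contained in $\psi$. Verifying via the first-order optimality conditions of~\eqref{eq:optprogram} that $\v\mu^*$ is aligned with $\m L^\top\v w^*$ in such a way that this quadratic dominates uniformly on $A$, while carefully accounting for the accumulation of the $d$ Mills logarithmic corrections, is the technical heart of the argument; the exponent $d+1$ arises from precisely these sources.
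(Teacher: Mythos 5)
Your reduction of the $\liminf$ statement to the polynomial relative-error bound (plus the trivial direction $\bb E_{\v\mu^*}\hat\ell^2\geq\ell^2$) is fine and is essentially how the paper argues, but the core of your plan --- the pathwise bound $\exp(\psi(\v z;\v\mu^*))\leq C(-\ln\gamma)^{d+1}\ell(\gamma)$ on $\c C_\gamma$, obtained by a termwise Mills expansion of each $\ln\overline\Phi(\mu_j^*-\alpha_j(\v z))$ followed by ``completing the square'' --- is exactly the step you have not supplied, and it is where the argument breaks. On the unbounded region $\c C_\gamma$ the arguments $\mu_j^*-\alpha_j(\v z)$ are neither uniformly large nor even of one sign (each $\alpha_j$ depends on $\v z$ nonlinearly through $\ln(\gamma-\sum_{k<j}x_k)$ and linearly through $-\sum_{k<j}\m L_{jk}z_k$), so the expansion $\ln\overline\Phi(x)=-x^2/2-\ln x+\c O(1)$ cannot be applied term by term, and the claim that the resulting quadratics collapse to twice the optimum of \eqref{eq:optprogram} while the directions orthogonal to $\m L^\top\v w^*$ are dominated is precisely the ``technical heart'' you admit is unresolved. (Also, item 1 of Lemma~\ref{lem} is the right-tail equivalence $\bb P(S>\gamma,X_k=M)\simeq\bb P(X_k>\gamma)$, not a Mills expansion.) A second gap: to relate the optimum of \eqref{eq:optprogram} to $\ln\ell$ you invoke only an LDP-type lower bound, which pins down $\ln\ell$ only up to terms of smaller order than $\ln^2\gamma$; that is far too coarse to extract a polynomial-in-$(-\ln\gamma)$ prefactor, so even granting your pathwise bound you could not conclude the stated $\c O((-\ln\gamma)^{d+1})$ rate.

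The paper's proof avoids both difficulties and locates the exponent $d+1$ in a different place than you do. After the same change of measure, it bounds every factor $\overline\Phi(\mu_j^*-\alpha_j)\leq 1$, so that $\bb E_{\v\mu^*}\hat\ell^2\leq\exp(\|\v\mu^*\|^2)\,\bb P((\v Z-\v\mu^*)\in\c C_\gamma)$; the shifted lognormal probability is then controlled by the Jensen/AM--GM bound \eqref{Jensen} with mean vector $\v\nu-\m L\v\mu^*$, and the optimality of $(\v w^*,\v\mu^*)$ in \eqref{eq:optprogram} permits replacing it by the explicit pair $(\bar{\v w},\bar{\v\mu})$ of \eqref{mubar}, yielding the closed-form Gaussian bound $\bb E_{\v\mu^*}\hat\ell^2\leq\exp\bigl(-(\ln\gamma-\bar{\v w}^\top\v\nu+\bar{\v w}^\top\ln\bar{\v w})^2/\bar{\v w}^\top\Sigma\bar{\v w}\bigr)$ with no logarithmic corrections at all. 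The entire factor $(-\ln\gamma)^{d+1}$ then comes from comparing this bound with the sharp Gulisashvili--Tankov asymptotic \eqref{asymp1} for $\ell^2(\gamma)$, whose polynomial prefactor supplies the $(d+1)\ln(-\ln\gamma)$ term --- not from Mills corrections on the second-moment side. To repair your write-up, either prove the uniform pathwise bound (hard, and not needed) or adopt the expectation-level bound above; in either case the sharp asymptotic for $\ell$ is indispensable and must replace your LDP appeal.
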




A significant advantage of \eqref{condest} is that it is 
amenable to a randomized quasi Monte Carlo implementation \cite[Chapter 2, Algorithm 2.3]{kroese2011handbook}. This is because \eqref{condest} is 
a smooth infinitely differentiable estimator and as a result has finite \emph{Koksma-Hlawka} discrepancy bound
\cite[Chapter 2, Equation 2.3]{kroese2011handbook}. 
The advantage of smoothness even carries over to the estimator of the density of the SLN distribution.
The result is that we achieve significant variance reduction --- a point illustrated in the following sections.

\subsubsection{Numerical Comparison}
\label{sec:GT}
In this section we compare the performance  \eqref{condest} against the GT estimator \cite[Equation (65)]{gulisashvili2016tail}. In comparing relative performance, we use the (estimated) relative error in percentage,
$
\mathrm{RE}(\widehat{\ell})=\sqrt{\var(\hat{\ell} )/n}/\ell
$
and {\em work-normalized relative variance},
$ 
\mathrm{WNRV}(\hat{\ell})=\mathrm{RE}^2(\widehat{\ell})\times (\textrm{total computing time in seconds})
$.

Table~\ref{tab:cdf} shows the numerical results using a sample size of
$n=10^6$ and the parameters
 $d=20,\v \nu = \v 0,\Sigma = {\rm diag}(\v \sigma)$, where $\sigma_k^2 = k$. 
\begin{table}[H]
\centering
\caption{Results for $d=20,\v \nu = \v 0,\Sigma = {\rm diag}(\v \sigma)$, where $\sigma_k^2 = k$.} 
\begin{tabular}{c|c|c|c|c|c|c}

 $\gamma$ & $\widehat{\ell}$ & $\widehat{\ell}_{\rm GT}$&${\rm RE}(\widehat{\ell})\%$&${\rm RE}(\widehat{\ell}_{\rm GT})\%$& {WNRV}$(\widehat{\ell})$ & WNRV$(\widehat{\ell}_{\rm GT})$ \\ \hline 
12& \num{1.68E-04} & \num{1.67E-04} & 0.198 &4.81&\num{2.37E-05} & \num{1.75e-03} \\
10& \num{6.82E-05} & \num{6.88E-05} & 0.217 &6.66&\num{2.84E-05} & \num{3.26e-03}  \\
8& \num{2.01E-05} & \num{2.02E-05} & 0.244 &4.91&\num{3.69E-05} & \num{1.85e-03} \\
6& \num{3.54E-06} & \num{3.46E-06} & 0.285 &5.17&\num{5.00E-05} & \num{2.00e-03}  \\ 
4& \num{2.13E-07} & \num{2.17E-07} & 0.368 &5.46&\num{8.44E-05} & \num{2.27e-03}  \\
3& \num{2.20E-08}& \num{2.35E-08} & 0.439 &6.89&\num{1.21E-04} & \num{3.65e-03}  \\  
2& \num{6.05E-10} & \num{5.63E-10}& 0.567 &10.9&\num{2.04E-04} & \num{9.19e-03}   \\
1& \num{4.24E-13} & \num{4.31E-13}& 0.937 &17.8&\num{5.47E-04} & \num{2.35e-02}   
  \end{tabular}
	\label{tab:cdf}
\end{table}
The results are self-explanatory --- we can see the that for $\gamma=1$, the relative error of the GT estimator is  large.

  In our numerical simulations  we observe that the GT
estimator performs at its best 
 when all $\nu$'s   are the same, and otherwise it
may not perform so well. For example, in Table~\ref{tab:cdftwo} the relative error 
is  larger, because we use the different means $\nu_k = k - d,\;k=1,\ldots,d$.

\begin{table}[H]
\centering
\caption{Results for $\Sigma = {\rm diag}(\v \sigma),\nu_k = k - d,\sigma_k^2 = k,d=10$.} 
\begin{tabular}{c|c|c|c|c|c|c}
\multicolumn{3}{c}{} & \multicolumn{2}{|c}{relative error \%}& \multicolumn{2}{|c}{work normalized rel. var.}  \\
 \hline
$\gamma$ & $\widehat{\ell}$ &   $\widehat{\ell}_{\rm GT}$&${\rm RE}(\widehat{\ell})$&${\rm RE}(\widehat{\ell}_{\rm GT})$& {WNRV}$(\widehat{\ell})$ & WNRV$(\widehat{\ell}_{\rm GT})$ \\ \hline 
\num{1} & \num{1.25E-01} & \num{5.47E-09} & \num{0.0389} & \num{41} & \num{4.68E-07} & \num{6.58e-02}   \\
\num{1E-01} & \num{2.75E-03} & \num{5.39E-05} & \num{0.0956} & \num{51.4} & \num{2.82E-06} & \num{1.02e-01} \\ 
\num{1E-02} & \num{7.10E-07} & \num{7.47E-07} & \num{0.209} & \num{38} & \num{1.33E-05} & \num{5.67e-02} \\
\num{1E-03} & \num{8.59E-14} & \num{8.13E-14} & \num{0.466} & \num{7.58} & \num{6.80E-05} & \num{2.29e-03}  \\
\num{1E-04} & \num{1.03E-25} & \num{1.07E-25} & \num{0.967} & \num{9.68} & \num{2.99E-04} & \num{3.77e-03}   \\
\num{1E-05} & \num{1.10E-43} & \num{8.92E-44} & \num{1.79} & \num{11.9} & \num{1.01E-03} & \num{5.49e-03} \\
\num{1E-06} & \num{4.27E-68} & \num{2.61E-68} & \num{2.81} & \num{14.2} & \num{2.48E-03} & \num{8.03e-03}   
  \end{tabular}
\label{tab:cdftwo}
\end{table}

In the above setting, it appears that the accuracy of the GT estimator initially deteriorates before it improves.
 One explanation for this phenomenon is that the asymptotic approximation upon which the GT estimator is built is poor in a non-asymptotic regime -- a point revisited in Section~\ref{sec:asymp}.

We observe that both estimators benefit
 from positive correlation. For example, if we take $\v \nu$ to be a linearly spaced vector on the interval $[0,1/4]$ with $d=50$, and $\rho =0.25,\Sigma=0.25^2(\rho \v 1 \v 1^\top + (1-\rho)\m I)$, then Table~\ref{tab:cdf3}
shows the slowly increasing relative error for both estimators as $\gamma$ becomes smaller. Again, observe that the variance of the new estimator \eqref{condest}
is typically  more than a hundred times smaller. 
\begin{table}[H]
\centering
\caption{Results for covariance matrix with positive correlation.} 
\begin{tabular}{c|c|c|c|c|c|c}
\multicolumn{3}{c}{} & \multicolumn{2}{|c}{relative error \%}& \multicolumn{2}{|c}{work normalized rel. var.}  \\
 \hline
 $\gamma$ & $\widehat{\ell}$ &   $\widehat{\ell}_{\rm GT}$&${\rm RE}(\widehat{\ell})$&${\rm RE}(\widehat{\ell}_{\rm GT})$& {WNRV}$(\widehat{\ell})$ & WNRV$(\widehat{\ell}_{\rm GT})$ \\ \hline 
 40 & \num{1.85E-03} & \num{1.86E-03} & \num{0.169} & \num{2.45} & \num{4.41E-05} & \num{1.07e-03}    \\
38 & \num{4.83E-04} & \num{5.10E-04} & \num{0.178} & \num{3.24} & \num{4.92E-05} & \num{1.93e-03}   \\
36 & \num{9.96E-05} & \num{9.63E-05} & \num{0.189} & \num{2.01} & \num{5.49E-05} & \num{7.57e-04} \\
34 & \num{1.57E-05 }& \num{1.56E-05} & \num{0.199} & \num{4.47} & \num{6.09E-05} & \num{3.62e-03}   \\
32 & \num{1.79E-06} & \num{1.89E-06} & \num{0.209} & \num{7.62} & \num{6.78E-05} & \num{1.06e-02} \\
30 & \num{1.41E-07} & \num{1.36E-07} & \num{0.219} & \num{2.86} & \num{7.43E-05} & \num{1.52e-03}   \\ 
28 & \num{7.06E-09} & \num{7.10E-09} & \num{0.23} & \num{2.70} & \num{8.25E-05} & \num{1.32e-03} \\
26 & \num{2.09E-10} & \num{2.13E-10} & \num{0.241} & \num{4.11} & \num{9.00E-05} & \num{3.18e-03}  \\
24 & \num{3.25E-12} & \num{3.37E-12} & \num{0.251} & \num{3.13} & \num{9.73E-05} & \num{1.84e-03}   \\
22 & \num{2.28E-14} & \num{2.42E-14} & \num{0.263} & \num{3.83} & \num{1.08E-04} & \num{2.66E-03}
 \end{tabular} 
\label{tab:cdf3}
\end{table}

Finally, we verify that when $\Sigma$ satisfies the property in Theorem~\ref{theorem:VRE left}, we obtain vanishing relative error. Consider the numerical
example from \cite{rached2017efficient}, where $\nu=(4,4,4,4)^\top$ and 
\[
\Sigma=\left[\begin{array}{cccc} 
1&2&2&2\\
2&5&4&4\\
2&4&4.5&4\\
2&4&4&4.5
\end{array}\right]
\]
This $\Sigma$ satisfies the property that $\Sigma_{11}<\Sigma_{1j}$ for all $j\not =1$.
Table~\ref{VRE table} shows that in this case the gains from using the strongly efficient estimator are significant --- the relative error is easily more than a thousand times smaller.
\begin{table}[H]
\centering
\caption{Comparison between the strongly efficient estimator \eqref{simple} and the weakly efficient estimator\eqref{condest}.}  \vspace{5mm}
\begin{tabular}{c|c|c|c|c}
\multicolumn{3}{c}{} & \multicolumn{2}{|c}{relative error \%} \\
\hline
 $\gamma$ &  $\widehat{\ell}_{\v 0 }(\gamma)$ &  $\widehat{\ell}(\gamma)$ & ${\rm RE}(\widehat{\ell}_{\v 0})$ & ${\rm RE}(\widehat{\ell})$ \\ \hline
 \num{10} & \num{1.91e-02} & \num{1.91e-02} & \num{1.04e-03} & \num{9.82e-04} \\
 \num{1} & \num{2.40e-05} & \num{2.39e-5} &  \num{5.04e-04} & \num{1.37e-03} \\
 \num{1e-1} & \num{1.39e-10} & \num{1.39e-10} & \num{1.99e-04} & \num{1.82e-3} \\ 
\num{1e-2} & \num{3.78e-18} & \num{3.79e-18 }& \num{7.29e-05} & \num{2.18e-3}\\ 
\num{1e-3} & \num{5.29e-28} & \num{5.29e-28} & \num{2.65e-05} & \num{2.49e-3}\\ 
\num{1e-4} & \num{3.82e-40} & \num{3.83e-40} & \num{8.78e-06} & \num{2.76e-3}\\ 
\num{1e-5} & \num{1.42e-54} & \num{1.42e-54} & \num{3.35e-06} & \num{3.01e-3}\\
\num{1e-6} & \num{2.68e-71} & \num{2.68e-71} & \num{1.01e-06} & \num{3.23e-3}\\ 
  \end{tabular}
	\label{VRE table}
\end{table}

\subsubsection{Acceleration via Quasi Monte Carlo}
As mentioned previously, a significant advantage of  estimator \eqref{condest} is that 
it is a smooth infinitely differentiable function, amenable 
to acceleration using quasirandom sequences \cite[Chapter 2]{kroese2011handbook}. While the
 standard error of a Monte Carlo estimator, driven by pseudorandom numbers,  decays at the canonical rate of $\c O(n^{-1/2})$, 
 the standard error of a Quasi Monte Carlo estimator, driven by quasirandom numbers,  decays at the superior rate of $\c O(n^{-1/2-\delta})$ for some $\delta>0$
that depends on the dimension $d$ and the  smoothness of the estimator. 

 Figure~\ref{fig:cdf} below shows that for $d=5$, the rate of decay of \eqref{condest}  improves from the canonical rate of  $\c O(n^{-0.5})$ (when using a pseudorandom
sequence) to approximately $\c O(n^{-0.93})$ when using  Sobol's quasirandom sequence
\cite[Section 2.5]{kroese2011handbook}. Here the relative error is estimated using 
$100$ independent random shifts of Sobol's quasirandom pointset \cite[Section 2.7]{kroese2011handbook}, and the number $-0.93$ is simply the slope of the line of best fit.
\begin{figure}[H]
	 \caption{The relative error of estimator \eqref{condest} for $\gamma=\bb E S=5 \exp(1/2), d=5$, 
	$\Sigma=\m I,\v\nu=\v 0$. Also displayed is a reference line with the canonical slope of $-1/2$.} 
 \begin{center}
	\includegraphics[scale=0.6]{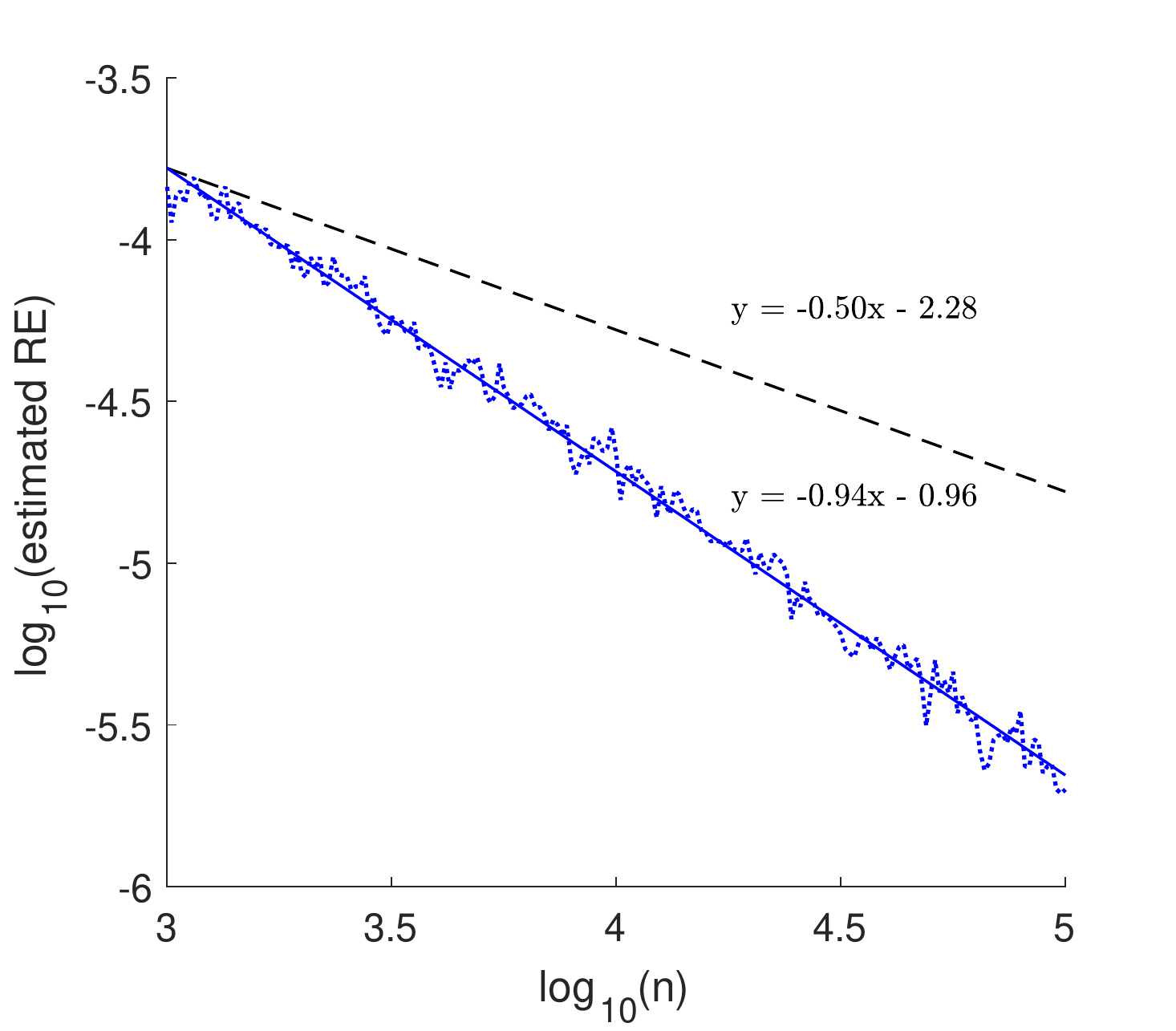}
\end{center}
		\label{fig:cdf}
\end{figure}

\subsection{Density Estimator} To derive the smooth density estimator, we use the so-called
 \emph{push-out method} \cite[Chapter 11]{kroese2011handbook}. In particular, observe that we can ``push-out" $\gamma$ as follows:
\[\textstyle
\ell(\gamma)=\bb P(\exp(\nu_1-\ln(\gamma)+L_{11}Z_1)+\cdots+
\exp(\nu_d-\ln(\gamma)+\sum_{j=1}^dL_{dj}Z_j)\leq 1)
\]
 Therefore,  the pdf of the SLN distribution can written as the integral:
\[
\begin{split}
f(\gamma)=\frac{\partial\ell}{\partial\gamma}&=\int_{\sum_i\exp(u_i)<1}\frac{\partial}{\partial\gamma}\phi_\Sigma(\v u-\v\nu+\v 1\ln\gamma)\m d\v u\\
&=\int_{\sum_i\exp(u_i)<1}\phi_\Sigma(\v u-\v\nu+\v 1\ln\gamma)\frac{-\v 1^\top\Sigma^{-1}(\v u-\v\nu+\v 1\ln\gamma)}{\gamma}\m d\v u\\
&=
\int_{\sum_i\exp(u_i)<\gamma}\phi_\Sigma(\v u-\v\nu)\frac{-\v 1^\top\Sigma^{-1}(\v u-\v\nu)}{\gamma}\m d\v u\\
&=\int_{\bb R^d}\phi(\v z)\frac{-\v z^\top\m L^{-1}\v 1}{\gamma}\textstyle\bb I\{\exp(\nu_1+L_{11}z_1)+\cdots+\exp(\nu_d+\sum_{j=1}^dL_{dj}z_j)<\gamma\}\m d\v z
\end{split}
\]
As a result of this, our smooth unbiased estimator of the SLN pdf is:
\begin{equation}
\label{pdf est}
\hat f(\gamma)=\exp(\psi(\v Z;\v\mu^*))\frac{-\v Z^\top\m L^{-1}\v 1}{\gamma},\qquad \v Z\sim \bb P_{\v\mu^*}
\end{equation}
\subsubsection{Numerical Examples}
Using  estimator \eqref{pdf est}, we can  estimate accurately the effect of the correlation coefficient
 $\rho$ on the shape of the SLN pdf. The figure below shows
that as $\rho$ increases the tail of the SLN pdf becomes thicker.  
\begin{figure}[H]
\caption{Estimate of the SLN pdf for $d=32,\v\nu=\v 0$, $\Sigma =\rho\v 1\v 1^\top +(1-\rho)\m I$, and varying $\rho$.}
\centering
\includegraphics[width=\linewidth]{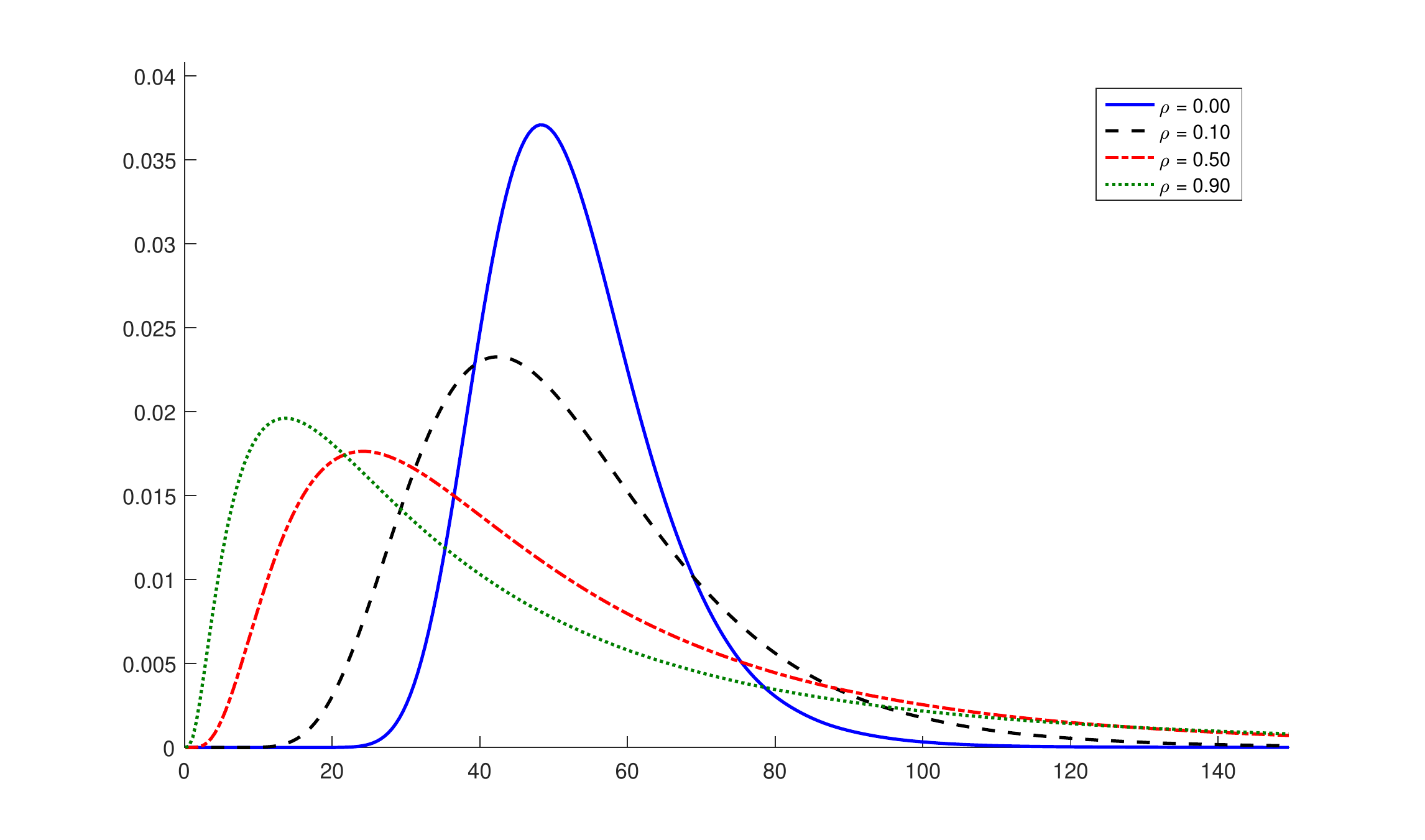}
\end{figure}

We now comment on what appears to be the only other viable Monte Carlo estimator of the SLN density.

The pdf estimator proposed in 
\cite[Equation 13]{asmussen2016exponential}  works only when the log-normal factors are independent, but one can extend it to the dependent case as shown in  \cite{asmussen2017conditional}. Let us denote the  estimator proposed in \cite{asmussen2017conditional,asmussen2016exponential} by $\hat\ell_{\rm A}$.
  Tables~\ref{tab:pdf}  and \ref{tab:pdf2} below compare the two estimators on two distinct numerical examples. The results suggest that   \eqref{pdf est} becomes significantly more efficient than $\hat\ell_{\rm A}$ when the $X_i$'s have different marginal distributions. Note that, as expected, the efficiency of \eqref{pdf est} deteriorates as we approach the right tail (the right tail requires a different approach, see Section~\ref{sec:right-tail}). 
\begin{table}[H]
\centering
\caption{The SLN distribution for $d=32,\v\nu=\v 0,\rho=0.5$, $\Sigma =\rho\v 1\v 1^\top +(1-\rho)\m I$, using $n=10^6$ samples.} \vspace{5mm}
\begin{tabular}{c|c|c|c|c|c|cs}
\multicolumn{3}{c}{} & \multicolumn{2}{|c}{relative error \%}& \multicolumn{2}{|c}{work normalized rel. var.}  \\
 \hline
 $\gamma$ & $\hat\ell(\gamma)$ &  $\widehat{f}(\gamma)$  &  RE($\widehat{f}$) &  RE($\widehat{f}_{{\rm A}}$) &  WNRV($\widehat{f}$) &  WNRV($\widehat{f}_{{\rm A}}$) \\ \hline

 140 & \num{0.957} & \num{9.12e-04} & 0.960 & 0.914  & \num{7.93e-04} &  \num{1.50e-03}    \\
 100 &  \num{0.894} & \num{2.53e-03} & 0.421 & 0.643   &  \num{1.52e-04} &  \num{7.42e-04}  \\
80 &  \num{0.826} & \num{4.46e-03} & 0.260 & 0.538    & \num{5.84e-05} & \num{5.05e-04}   \\
 60 & \num{0.705} & \num{7.96e-03} & 0.151 & 0.462   &  \num{2.00e-05} & \num{3.61e-04}  \\
50 &  \num{0.613} & \num{1.06e-02} & 0.113 & 0.436   & \num{1.14e-05}& \num{3.23e-04}  \\
40 & \num{0.490} & \num{1.38e-02} & 0.090 & 0.426 & \num{7.05e-06} & \num{3.08e-04} \\
30 &  \num{0.336} & \num{1.69e-02} & 0.084 & 0.444   & \num{6.23e-06} & \num{3.31e-04}   \\
20 & \num{0.163} & \num{1.71e-02} & 0.098 & 0.543  & \num{8.70e-06} & \num{4.94e-04} \\
15 &  \num{0.831} & \num{1.41e-02} & 0.113 & 0.693  & \num{1.17e-05} &  \num{7.99e-04} \\
 \end{tabular}
\label{tab:pdf}
\end{table}

\begin{table}[H]
\centering
\caption{The SLN distribution for $d=10, \rho=0$, $\nu_i=i-d$, $\sigma_i^2 =i$, estimated with $n=10^6$ samples.} \vspace{5mm}
\begin{tabular}{c|c|c|c|c|c|c}
\multicolumn{3}{c}{} & \multicolumn{2}{|c}{relative error \%}& \multicolumn{2}{|c}{work normalized rel. var.}  \\
 \hline
 $\gamma$ & $\hat\ell(\gamma)$ &  $\widehat{f}(\gamma)$  &  RE($\widehat{f}$) &  RE($\widehat{f}_{{\rm A}}$) &  WNRV($\widehat{f}$) &  WNRV($\widehat{f}_{{\rm A}}$) \\ \hline 

500 & \num{0.964} & \num{5.28e-05} & 6.22 & 12.0 & \num{1.09e-02} & \num{2.60e-02}    \\
 100 & \num{0.881} & \num{8.01e-04} & 1.86 & 30.2  & \num{9.91e-04} & \num{1.66e-01}  \\
 30 & \num{0.746} & \num{4.81e-03} & 0.88 & 13.3  & \num{2.08e-04} &  \num{3.23e-02}  \\
 15 & \num{0.633} & \num{1.21e-02} & 0.59 & 7.23  &  \num{9.56e-05} &  \num{9.52e-03}   \\
 7 & \num{0.484} &  \num{2.96e-02} & 0.39 & 5.26 &  \num{4.26e-05} &  \num{5.05e-03}   \\
 3 & \num{0.310} & \num{6.58e-02} & 0.27 & 3.51  & \num{2.09e-05} &    \num{2.22e-03}  \\
 1 & \num{0.125} & \num{1.29e-01} & 0.17 & 2.42  & \num{8.86e-06} & \num{1.03e-03}    \\
 0.5 & \num{0.0548} & \num{1.50e-01} & 0.14 & 2.24  & \num{5.56e-06} & \num{9.12e-04}  \\
 \end{tabular}
\label{tab:pdf2}
\end{table} 

Finally, 
 we confirmed that, as expected, quasi Monte Carlo again accelerates the speed of convergence of the smooth estimator \eqref{pdf est} by as much as approximately $\c O(n^{-0.92})$. 
The qualitative behavior 
is depicted on Figure~\ref{fig:pdf}, where we also show the rate of convergence of
 its competitor $\hat\ell_{\rm A}$. Here, again, the relative error (in percent) is  estimated using  $100$ independent random shifts of Sobol's quasirandom pointset \cite[Section 2.5]{kroese2011handbook}.

The reason that estimator \eqref{pdf est} achieves a better convergence rate  
is  that, while  $\hat\ell_{\rm A}$ is  continuous, but not differentiable, the estimator \eqref{pdf est} is infinitely differentiable, and  hence more amenable to acceleration with quasirandom sequences.   
\begin{figure}[H]
	 \caption{The relative error of estimator \eqref{condest} (in blue with slope -0.92) for $\gamma=\bb E S=5 \exp(1/2), d=5$, 
	$\Sigma=\m I,\v\nu=\v 0$, as well as that of $\hat\ell_{\rm A}$ (in red with slope -0.67).}
 \begin{center}
	\includegraphics[scale=0.7]{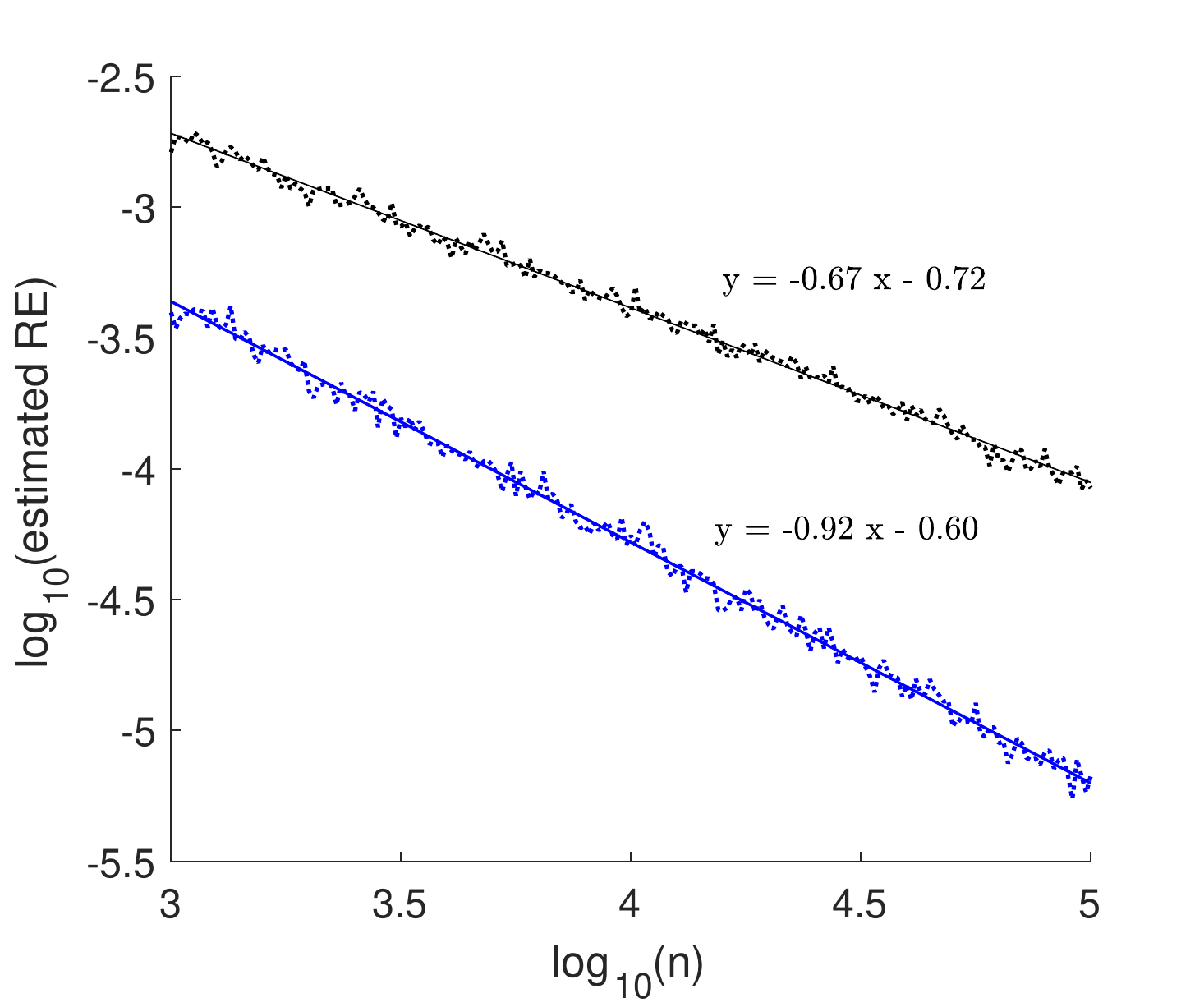}
\end{center}
		\label{fig:pdf}
\end{figure}



%
%

\subsection{Perfect Simulation from Conditional Distribution}
\label{sec:alg}
One of advantages of our approach is that when $d$ is not too large, for the first time,  it is possible to simulate exactly  from the distribution of $\v X$, conditional on the rare-event $\left\{\sum_{k=1}^d X_k\le \gamma\right\}$. 
As $\psi(\v z; \v \mu^*)$ is concave in $\v z$ for any fixed $\v \mu^*$ (see, for example, \cite[Lemma 1]{botev2017normal}), we can easily obtain its  maximum. This can then be used in the following acceptance-rejection sampling procedure.

\begin{enumerate}
\item {\bf Require:} $c = \max_{\v z}\psi(\v z; \v \mu^*)$ 
\item {\bf Until}  $E > c-\psi(\v Z; \v \mu^*)$ {\bf do} \\
 Simulate $\v Z \sim \bb P_{\v \mu^*}$ and $E \sim {\sf Exp}(1)$, independently.

\item {\bf Return} $\v X = \exp(\v \nu + \m L \v Z)$ as a sample from the conditional distribution.
\end{enumerate}
We next exploit the ability to simulate from the conditional SLN distribution 
to simulate exactly the  stock price trajectories  of an Asian option
with positive payoff. 
\subsubsection{Exact Simulation of Stock Prices under Black-Scholes}
 We consider the average value of a stock price observed at a set of discrete times on the interval $[0,T]$, 
\[ 
\bar{X}_T=\frac{1}{d+1}\sum_{i=0}^{d}X_{t_i}, \quad t_0=0 < t_1 < t_2 < \cdots < t_d = T,
\]
under the Black-Scholes model 
$X_t = X_0 \exp\left((r-\sigma^2/2)t + \sigma W_t\right)$, where:
(1) $W_t$ is the Wiener process at time $t$; (2)  $\sigma$ is the volatility coefficient; (3)  $r$ is the risk-free interest rate. Then, for an Asian Put option  with maturity $T$ and strike price $K$ the payout is  $(K - \bar{X}_T)^+$. Since $\v X=\left(X_{t_1},X_{t_2}, \ldots, X_{t_d}\right)$  is a  log-normal random vector with 
\[ \begin{split}
& \nu_i = \ln(X_0)+(r-\sigma^2/2)t_{i}, \quad i=1,\ldots,d  \\ 
& \Sigma_{ij} = \sigma^2\min\left(t_i, t_j \right), \quad i,j=1,\ldots,d,
\end{split}
\]
 we can use our algorithm above 
to simulate a realization of the stock price path conditional on the event $\{\bar{X}_T<K\}=\{X_{t_1}+\cdots+X_{t_d}<(d+1)K - X_{0}\}$. Simulation of such an $\v X$ conditional on  the rare-event $\{\bar{X}_T \le 30\}$ provides insight  into how the rare event occurs, that is,  how the stock price must behave for a positive payoff.

The following figure shows one stock price realization with parameters
$X_0 = 50, K=30, \sigma = 0.25,r=0.07, T=4/12, d=88$.
\begin{figure}[H]
\begin{center}
\caption{Stock price trajectory, conditional on  $\bar{X}_T \le 30$. }   
	\includegraphics[width=\linewidth]{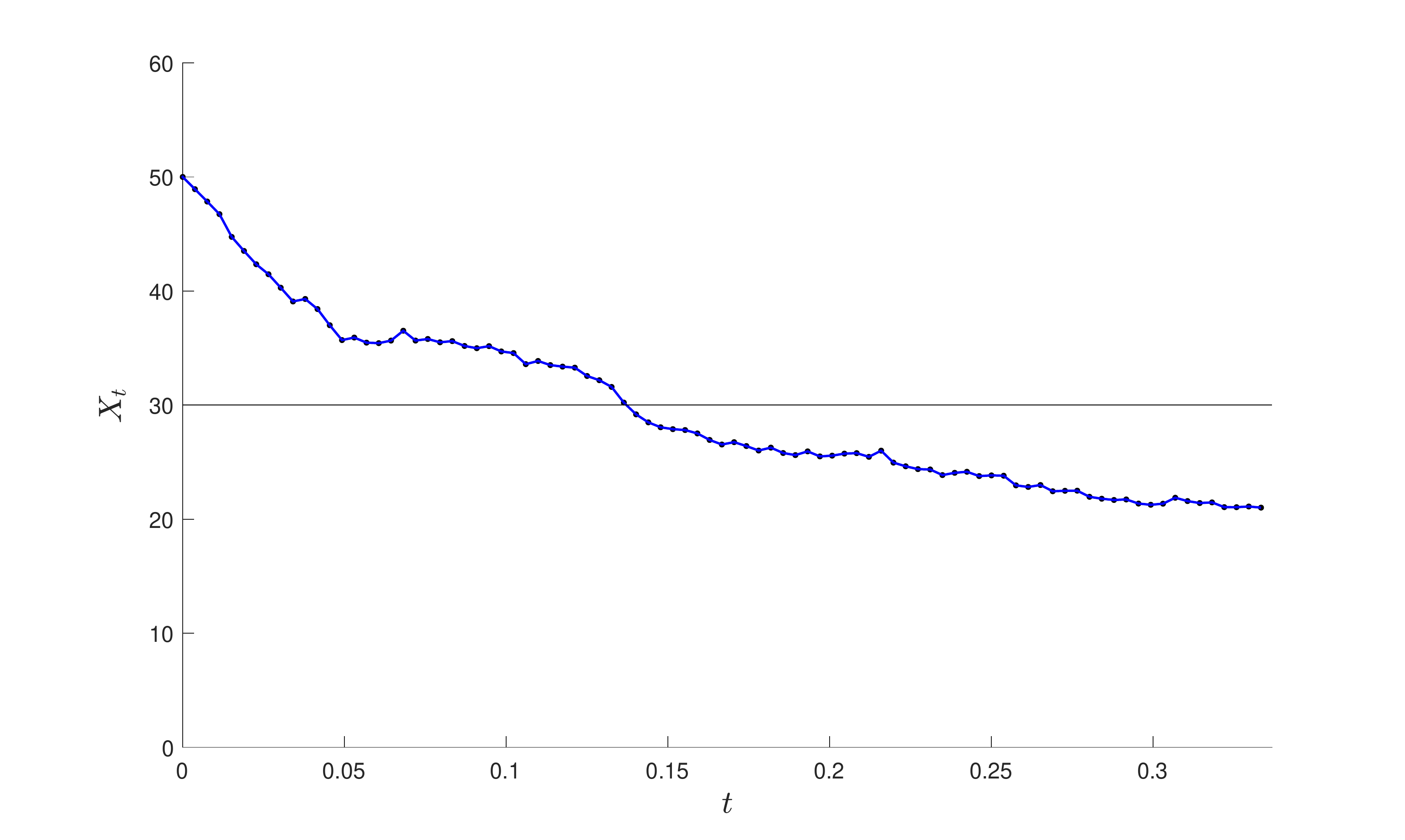}
	\end{center}
	\end{figure}
We note that, since  $\bb P(\bar{X}_T \le 30)\approx 2\times 10^{-11}$ is a rare-event probability, exact simulation of such a stock price trajectory is not possible using a naive acceptance-rejection, because the acceptance rate would be approximately $2\times 10^{-11}$. Instead, our  algorithm enjoys the (estimated) acceptance rate of $5.9\%$.


\section{Accurate Estimation of the Right Tail}
\label{sec:right-tail}
In this section, we provide an  estimator of   the right tail of the SLN distribution, that is,
\[
\bb P(X_1+\cdots+X_d\geq \gamma),\qquad \ln\v X\sim\mathsf{N}(\v\nu,\Sigma)
\] 
that works well for many parameter settings for which all existing estimators  fail. 
In  order to keep the notation minimal,  we recycle the notation for the left tail and henceforth let
\[
\ell(\gamma)\idef\bb P(\exp(Y_1)+\cdots\exp(Y_d)\geq\gamma),\qquad \v Y\sim\mathsf{N}(\v\nu,\Sigma)
\]
Further, we set
$S=X_1+\cdots+X_d, M=\max_i X_i$, and
$\sigma_i^2=\Sigma_{ii}, \sigma=\max_k \sigma_k, \nu=\max\{\nu_k: \sigma_k=\sigma\}$. Note that with all random variates defined on the same probability space,  we can write $\v X=\exp(\v Y)$.

One of the reasons why estimating the right tail is difficult is due to the
heavy-tailed behavior  of $\ell(\gamma)$ as $\gamma\uparrow\infty$ (see  Corollary~\ref{cor:asymp} here or \cite{asmussen2008asymptotics}):
\begin{equation}
\label{asymp}
\ell(\gamma)\simeq \ell_\mathrm{as}\idef\sum_{k=1}^d\bb P(Y_i\geq\ln\gamma)=\sum_{k=1}^d \overline \Phi((\ln\gamma-\nu_k)/\sigma_k)
\end{equation}
To tackle this problem, the authors of  \cite{asmussen2011efficient,blanchet2008efficient,gulisashvili2016tail,kortschak2013efficient} propose a number of theoretically efficient estimators. The problem with these estimators, however, is that their established theoretical efficiency  does not
 necessarily translate into  estimators with reasonably low Monte Carlo variance. Before proceeding to remedy this problem, we next explain why these existing proposals can fail to estimate $\ell(\gamma)$ --- we provide both numerical evidence and theoretical insight.


\subsection{Variance Boosted Estimator} We call the first estimator proposed in the literature   the \emph{variance boosted} estimator \cite{asmussen2011efficient,blanchet2008efficient}. It is defined as follows.

Let $\bb P_\theta$ be an importance sampling measure under which 
$\v Y\sim \mathsf{N}(\v\nu,\Sigma/(1-\theta))$ for some parameter $\theta\in[0,1)$.
If we take $\theta$ sufficiently close to unity, then we can inflate the variance of $\v Y$  to induce the event $\{S>\gamma\}$. We thus obtain the variance boosted estimator:
\begin{equation}
\label{var boost}
\hat\ell_\theta(\gamma)=\frac{\exp(-\theta (\v Y-\v \nu)^\top\Sigma^{-1}(\v Y-\v\nu)/2)}{(1-\theta)^{d/2}}\bb I\{S>\gamma\},\qquad \v Y\sim \bb P_\theta
\end{equation}
One can choose $\theta$ optimally and show \cite{asmussen2011efficient} that:
\[
\frac{\bb E_\theta \hat\ell^2_\theta}{\ell^2}=\Theta([\ln\gamma]^{d/2+1}\gamma^{1/4})
\]
Therefore, we expect that the variance boosted estimator will only be useful for very small $d$ and small $\gamma$. In contrast, in Section~\ref{sec:new}
we show that our new proposal has relative error which grows at the much slower rate of $\Theta(\ln\gamma)$.

Consider a simple example in which all log-normals are iid with 
 $\sigma=0.25,\Sigma=\m I \times \sigma^2,\v\nu=\v 0,d=30$. 
Table~\ref{tab:var boost} shows the estimated values for $\ell(\gamma)$
for different values of $\gamma$ using three different estimators:
the variance boosted  $\hat\ell_\theta$; the \emph{Asmussen-Kroese estimator}  \cite{asmussen2006improved},
\begin{equation}
\label{AKest}
\textstyle
\hat\ell_\mathrm{AK}=d\overline\Phi\left(\frac{1}{\sigma}\ln\left[(\gamma-\sum_{j<d}X_j) \vee \max_{j<d}X_j \right]\right),\quad \ln\v X\sim\mathsf{N}(\v 0, \sigma^2\m I) ;
\end{equation}
 and our proposed estimator $\hat\ell$ in Section~\ref{sec:new}. The data was populated using $n=10^7$ independent replications of each estimator. The difference in the CPU run times for all methods was negligible (all between 7 to 10 seconds), and hence not reported here. The  conclusion from the results in the table is that the variance boosted estimator, $\hat\ell_\theta$, is not useful due to its high variability. 


\begin{table}[H]

\caption{Comparative performance of the variance-boosted  and  Asmussen-Kroese estimators. The proposed estimator $\hat\ell$ is given in column two and described in Section~\ref{sec:new}.}
\label{tab:var boost}
\begin{center}
\begin{tabular}{c|c|c|c|c|c}  
\multicolumn{3}{c}{} & \multicolumn{3}{|c}{relative error \%}  \\
 \hline
$\gamma$& $\hat\ell$  & $\hat\ell_\mathrm{AK}$ &$\mathrm{RE}(\hat\ell)$  & $\mathrm{RE}(\hat\ell_\mathrm{AK})$ & $\mathrm{RE}(\hat\ell_\theta)$  \\
\hline
30&	0.74	  &    0.74    &0.199 &	0.0321&	0.314 	\\	
33&	0.079   & 	   0.079    &0.26  &	0.0871&	3.67	\\
36&	0.00052 &	      0.00052 &0.403 &	0.684&	39.8 	\\
39&	\num{2.94e-07}&	\num{3.31e-07}  &0.725&	17.9&	51.9 		\\
42&	\num{2.29e-11}&	\num{9.23e-14}  &1.45&	54.6&	99.9 		\\
45&	\num{3.92e-16}&	\num{7.78e-20} &2.57&	64.4&	97.8	\\
48&	\num{1.93e-21}&	\num{2.13e-25}  &4.44&	31.7&	97	\\
51&	\num{3.98e-27}&	\num{2.40e-29}  &7.85&	25.2&	81.5 		\\
54&	\num{8.58e-33}&	\num{3.96e-33}  &3.22&	15.3&	100	\\
57&	\num{3.44e-36}&	\num{3.07e-36}  &0.418&	13.3&	69.8 	\\	
60&	\num{4.26e-39}&	\num{3.86e-39}  &0.203&	5.21&	99.7 	\\
63&	\num{1.06e-41}&	\num{1.01e-41}  &0.18&	2.92&	99		\\
66&	\num{4.38e-44}&	\num{4.39e-44}  &0.162&	1.58&	64.8 \\
69&	\num{2.75e-46}&	\num{2.74e-46}  &0.16&	1.09&	100		\\
72&	\num{2.42e-48}&	\num{2.40e-48}  &0.155&	0.686&	98.3  \\	
75&	\num{2.83e-50}&	\num{2.81e-50}  &0.153&	0.498&	72.1 	\\
78&	\num{4.24e-52}&	\num{4.21e-52}  &0.151&	0.414&	95.7 	\\
81&	\num{7.87e-54}&	\num{7.86e-54}  &0.15&	0.287	&99.3		\\
84&	\num{1.78e-55}&	\num{1.78e-55}  &0.15&	0.26	&100 	\\
87&	\num{4.74e-57}&	  \num{4.75e-57} &0.15&	0.251	&90.5  	\\
90&	\num{1.48e-58}&	  \num{1.48e-58} &0.15&	0.189	&100	
\end{tabular}
\end{center}
\end{table}

\begin{figure}[H]
\caption{The estimated relative error of $\hat\ell_\theta$ as a function of $\theta$ using $10^7$ replications. The smallest
estimated relative error was $23\%$, corresponding to $\theta=0.71$. Where the estimate of $\ell(45)$ is $0$, the  relative error is recorded as unity (100\%).}
\begin{center}
\includegraphics[scale=0.6]{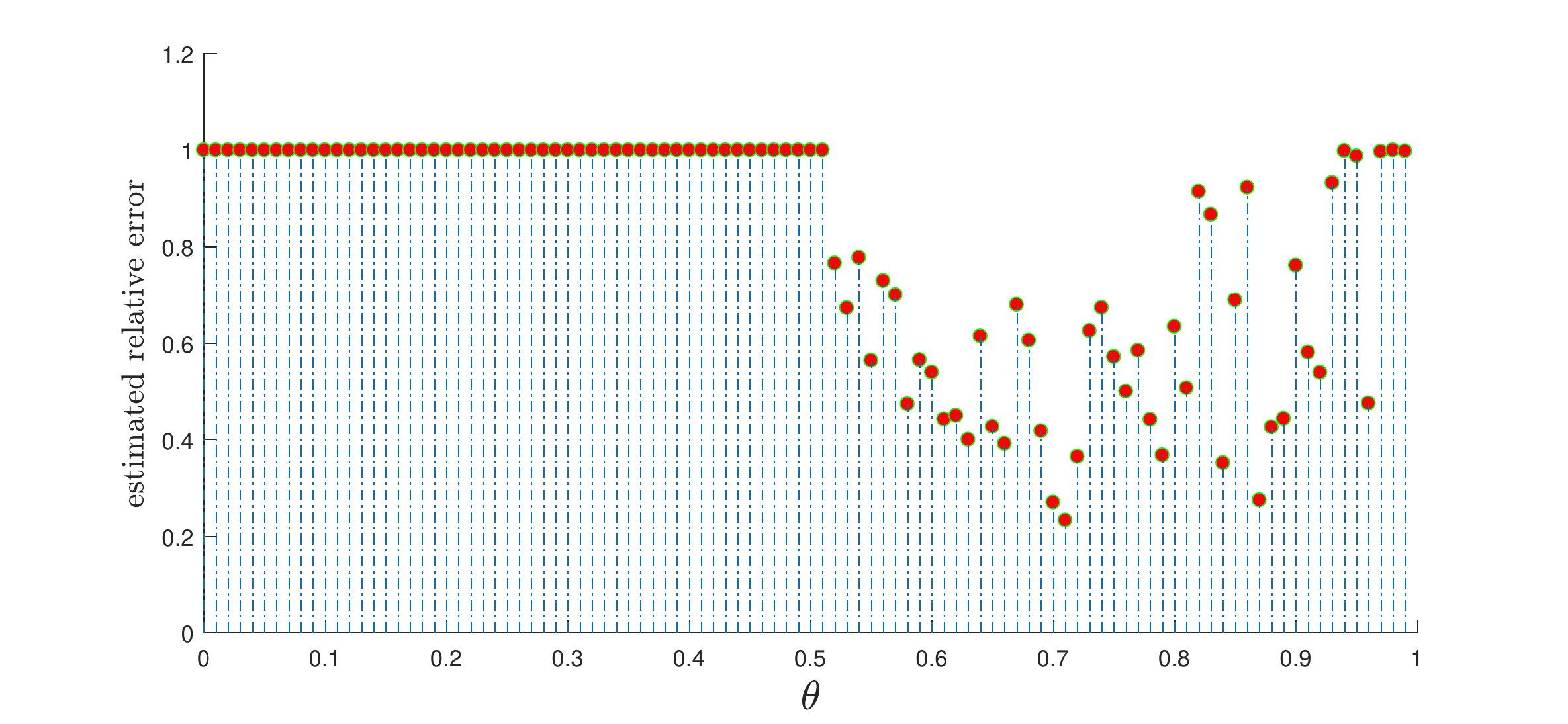}
\end{center}
\label{fig:varBoostedError}
\end{figure}

It is important to note that there  is no value for $\theta$
that yields reasonably low variance.  For example, Figure~\ref{fig:varBoostedError} shows the estimated relative 
error of $\hat\ell_\theta$ as a function of $\theta$ for $\gamma=45$ and all other parameters being the same
as in Table~\ref{tab:var boost}. The figure suggests that even if we knew the 
true variance-minimizing $\theta^*$ (obviating the need for approximating it), the estimator will  still not be useful. 

\subsection{Vanishing Relative Error Estimator}
\label{sec:ISVE}
Recognizing the deficiency of the variance boosted estimator \cite{asmussen2011efficient,blanchet2008efficient} propose the superior \emph{importance sampling vanishing relative eror} (ISVE) estimator. The main idea of the ISVE estimator is to split $\ell$
 into two parts:
\[
\ell=\bb P(M>\gamma)+\bb P(S>\gamma,M<\gamma),
\]
and estimate $\ell_1=\bb P(M>\gamma)$ and $\ell_2=\bb P(S>\gamma,M<\gamma)$ separately using two  different importance sampling schemes. In particular,
$\ell_1$ is estimated via
\begin{equation}
\label{ell1}
\hat\ell_1=\frac{\ell_\mathrm{as}(\gamma)}{\sum_{k=1}^d\bb I\{X_k>\gamma\}},\quad \v X\sim g(\v x),
\end{equation}
where $g$ is the mixture density:
\begin{equation}
\label{mixture}
g(\v x)\idef\frac{\phi_\Sigma(\v x-\v\nu)\sum_{k=1}^d \bb I\{x_k>\gamma\}}{\ell_\mathrm{as}(\gamma)},
\end{equation}
and the residual probability, $\ell_2$, is estimated via a  variance boosted estimator:
\begin{equation}
\label{residual}
\hat\ell_{2,\theta}(\gamma)=\frac{\exp(-\theta (\v Y-\v \nu)^\top\Sigma^{-1}(\v Y-\v\nu)/2)}{(1-\theta)^{d/2}}\bb I\{S>\gamma, M<\gamma\},\qquad \v Y\sim \bb P_\theta,
\end{equation}
where $\theta=1-\Theta(\ln^{-2}(\gamma))$. With this setup the ISVE estimator is
\[
\hat\ell_\mathrm{ISVE}=\hat\ell_1+\hat\ell_2
\]
 and it enjoys the vanishing relative error property \cite{asmussen2011efficient}:
\[
\frac{\var(\hat\ell_\mathrm{ISVE})}{\ell^2(\gamma)}\downarrow 0,\quad \gamma\uparrow\infty.
\]
Before we proceed to illustrate the practical performance of the ISVE estimator, we note that there are two issues that may indicate problematic performance.

First, in practical simulations one estimates the precision of an estimator $\hat l$ of $\ell$ by generating $n$ independent realizations, namely  $\hat l_{1},\ldots, \hat l_{n}$,    and then computing  the corresponding sample variance
$S_n^2=\frac{1}{n}\sum_{i=1}^n(\hat l_{i}-\bar l)^2,$ where
 $\bar l_n=(\hat l_{1}+\cdots+\hat l_{n})/n$. Ideally,  $S_n/(\bar l_n\sqrt{n})$ would then yield a consistent estimator of the relative error of the sample mean and in numerical experiments we would report either the pair $\bar l_n$ and $S_n/(\bar l_n\sqrt{n})$, or (say) the 95\% approximate confidence interval $\bar l_n\pm 1.96\times S_n/\sqrt{n}$. Unfortunately, these simple precision estimates cannot be applied to the
 $\hat\ell_\mathrm{ISVE}$ estimator, because the sample variance of $n$ independent replications of \eqref{ell1} 
is not a robust estimator of the true variance of $\hat\ell_1$. This is formalized in the following result, proved in the Appendix.
\begin{lem}[Inefficiency of Sample Variance]\label{proposition}
Let $
S_n^2$
 be the sample variance based on $n$ independent replications of \eqref{ell1} . Then, $S_n^2$ is not a logarithmically efficient estimator:
 \[
\limsup_{\gamma\uparrow\infty}\frac{\ln\var(S_n^2)}{\ln\var(\hat\ell_1)}<2,
\]
so that the relative error in estimating the precision of $\hat\ell_\mathrm{ISVE}$ grows at an exponential rate in $\ln^2(\gamma)$.
\end{lem}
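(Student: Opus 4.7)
The plan is to exploit the heavy concentration of $\hat\ell_1$ under $g$ near its maximal value $\ell_\mathrm{as}$: the rare but sizable downward excursions inflate $\mu_4\idef\bb E_g[(\hat\ell_1-\ell_1)^4]$ relative to the squared variance $\sigma^4$ by a factor that blows up with $\gamma$, so that the sample variance $S_n^2$ cannot reliably estimate $\sigma^2\idef\var_g(\hat\ell_1)$.

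First, I would derive exact moment identities. Setting $N(\v x)\idef\sum_k\bb I\{x_k>\gamma\}$ and $p_j\idef\bb P(N=j)$ under the nominal measure, a direct change-of-measure computation gives $\bb P_g(N=j)=j\,p_j/\ell_\mathrm{as}$, and since $\hat\ell_1=\ell_\mathrm{as}/N$,
\[
\bb E_g\bigl[\hat\ell_1^k\bigr]=\ell_\mathrm{as}^{k-1}\sum_{j=1}^d p_j/j^{k-1},\qquad k\geq 1,
\]
together with $\ell_\mathrm{as}=\sum_j j\,p_j$ and $\ell_1=\sum_j p_j$. Next, a standard bivariate Gaussian joint-tail calculation yields $\ln\bb P(N\geq 2)\simeq\alpha\ln\ell_\mathrm{as}$ with $\alpha>1$ as long as the Gaussian copula is non-degenerate (with $\alpha=2/(1+\rho^*)$ for the equal-variance case, where $\rho^*$ is the largest off-diagonal correlation). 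In particular $p_1=\ell_\mathrm{as}(1+o(1))$, $\ell_1\simeq\ell_\mathrm{as}$, and $\bb P(N\geq 2)=o(\ell_\mathrm{as})$. Substituting into the moment identities and collecting the dominant terms,
\begin{align*}
\sigma^2 &= \Theta\bigl(\ell_\mathrm{as}\,\bb P(N\geq 2)\bigr),\\
\mu_4 &= \Theta\bigl(\ell_\mathrm{as}^3\,\bb P(N\geq 2)\bigr),
\end{align*}
where the order of $\mu_4$ is driven by the event $\{N=2\}$: under it the deviation $\hat\ell_1-\ell_1=\ell_\mathrm{as}/2-\ell_1\sim -\ell_\mathrm{as}/2$ has magnitude $\Theta(\ell_\mathrm{as})$ with $g$-probability $\Theta(\bb P(N\geq 2)/\ell_\mathrm{as})$, whereas the typical event $\{N=1\}$ contributes only $\Theta((\ell_\mathrm{as}-\ell_1)^4)=\Theta(\bb P(N\geq 2)^4)$, which is asymptotically negligible.

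To conclude, since $\mu_4\gg\sigma^4$, the usual identity gives $\var(S_n^2)\simeq\mu_4/n$. Taking logarithms,
\[
\lim_{\gamma\uparrow\infty}\frac{\ln\var(S_n^2)}{\ln\var(\hat\ell_1)}=\lim_{\gamma\uparrow\infty}\frac{3\ln\ell_\mathrm{as}+\ln\bb P(N\geq 2)}{\ln\ell_\mathrm{as}+\ln\bb P(N\geq 2)}=\frac{3+\alpha}{1+\alpha}<2,
\]
which is the stated logarithmic inefficiency. A direct comparison gives $\ln(\sqrt{\var(S_n^2)}/\sigma^2)\simeq\tfrac{\alpha-1}{2}|\ln\ell_\mathrm{as}|$, and since the log-normal Mills ratio gives $|\ln\ell_\mathrm{as}|=\Theta((\ln\gamma)^2)$, the relative error of $S_n^2$ in estimating the precision of $\hat\ell_\mathrm{ISVE}$ grows as $\exp(\Theta((\ln\gamma)^2))$.

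The main obstacle is the sharp lower bound $\ln\bb P(N\geq 2)\gtrsim\alpha\ln\ell_\mathrm{as}$: the upper bound follows immediately from a union bound over pairs, but matching it requires a refined asymptotic for the bivariate Gaussian joint tail $\bb P(Y_i>\ln\gamma,Y_j>\ln\gamma)$, obtainable by an exponential tilt of the measure in the direction $\Sigma_{\{i,j\}}^{-1}\v 1$ for the pair $(i,j)$ attaining $\rho^*$, followed by Laplace asymptotics on the tilted law.
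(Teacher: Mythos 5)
Your argument is essentially the paper's: working under the mixture density \eqref{mixture}, you note $\hat\ell_1=\ell_\mathrm{as}/N$, that the event $\{N\geq 2\}$ has $g$-probability $\Theta(\bb P(N\geq 2)/\ell_\mathrm{as})$ and produces deviations of size $\Theta(\ell_\mathrm{as})$, which gives $\var_g(\hat\ell_1)=\Theta(\ell_\mathrm{as}\,\bb P(N\geq 2))$, $\mu_4=\Theta(\ell_\mathrm{as}^3\,\bb P(N\geq 2))$, hence $\var(S_n^2)/\var^2(\hat\ell_1)=\Theta(\ell_\mathrm{as}/\bb P(N\geq 2))$; this matches the paper's computation with $r(\gamma)=\ell_\mathrm{as}-\ell_1\asymp\bb P(N\geq 2)$, and your moment identities and the conclusion are correct. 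The one substantive caveat concerns the ingredient you defer: you have the difficulty backwards. For both conclusions of the lemma (the $\limsup<2$ and the $\exp(\Theta(\ln^2\gamma))$ growth of $\ell_\mathrm{as}/\bb P(N\geq 2)$) only an \emph{upper} bound $\bb P(N\geq 2)\leq\ell_\mathrm{as}^{\alpha+o(1)}$ for some fixed $\alpha>1$ is needed; the sharp lower bound you call ``the main obstacle'' is only required for your stronger claim that the ratio converges to the exact value $(3+\alpha)/(1+\alpha)$, which the lemma does not assert. Moreover, that upper bound is not ``immediate from a union bound'': the union bound merely reduces $\bb P(N\geq 2)$ to pairwise bivariate Gaussian joint tails, and the decisive estimate that each such joint tail decays with a strictly larger exponent than the dominant marginal tail (uniformly in the heterogeneous $(\nu_k,\sigma_k)$ setting, not just the equal-variance case where your $\alpha=2/(1+\rho^*)$ applies) is exactly what the paper establishes as Lemma~\ref{lem3}, via the convex quadratic-program characterization of the joint-tail exponent and the comparison $\bb P(Y_i>\gamma,Y_j>\gamma)=o(\bb P(Y_j>\alpha\gamma))$. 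Your proposed exponential tilt plus Laplace asymptotics would deliver the same estimate, so the gap is fillable, but as written the plan misplaces where the real work lies.
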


One practical consequence of the result above is that the relative error of $\hat\ell_\mathrm{ISVE}$ is severely underestimated during simulation, and frequently reported as being zero. In contrast to this negative result for the ISVE estimator, in Corollary~\ref{cor:var} we show that our new estimator  enjoys an asymptotically efficient estimator of its true variability. Even better,  estimation of the
precision of our estimator is not more difficult than estimating $\ell$ itself.

A second problematic issue is that, as we have already seen, the variance boosted estimator \eqref{var boost} is unreliable for 
estimating $\ell$, and that there is no value for $\theta$ that will render it a useful estimator. Since  \eqref{residual} only differs from
\eqref{var boost} with the addition of the constraint $M<\gamma$ and in the  different choice of $\theta$, we  should not be surprised to find  that \eqref{residual} is also a poor estimator of  $\ell_2=\bb P(S>\gamma,M<\gamma)$. Indeed,  there is no good value for $\theta$ that can make the relative error of  \eqref{residual} small. The behavior of the relative error of $\hat\ell_{2,\theta}$ as a function of $\theta$ is qualitatively the same as  that on 
Figure~\ref{fig:varBoostedError}. 


\subsubsection{Quality of Asymptotic Approximation}
\label{sec:asymp}
One of the arguments in favor of the ISVE estimator is that, while $\hat\ell_2$
may be a noisy estimator, it is a very small second order residual term, and  will not affect noticeably the high accuracy of the leading order term $\hat\ell_1$. 

Unfortunately, unless one considers extremely small probabilities,  the leading contribution term of $\ell=\ell_1+\ell_2$ is not $\ell_1$, but the residual $\ell_2$. This does not contradict the fact that asymptotically $\ell\simeq \ell_1$, because in the presence of a positive correlation $\ell_1$, or equivalently $\ell_\mathrm{as}$, can be an extremely poor approximation to $\ell$.

\begin{table}[H]

\caption{Illustration of the inaccuracy of the asymptotic approximation for moderate values of $\ell(\gamma)$. The asymptotic approximation $\ell_\mathrm{as}$ becomes useful only for probabilities smaller than about  $10^{-233}$.}
\begin{center}
\begin{tabular}{c|c|c|c}  
$\gamma$ & $\ell_\mathrm{as}$ & 95\% CI for $\hat\ell$ & $(\hat\ell-\ell_\mathrm{as})/\ell_\mathrm{as}$\\
\hline
$15$ & $ 1.2113\ldots\times 10^{-26}$ & $0.012\pm 0.001$ & $1.0\times 10^{24}$\\
$20$ & $2.1830\ldots\times 10^{-32}$ & $(5.80\pm 0.013)\times 10^{-5}$ & $2.66\times 10^{27}$\\
$40$ & $ 1.4175\ldots\times 10^{-48}$ & $(6.33\pm 0.016)\times 10^{-15}$ & $4.5\times 10^{33}$\\
$60$ & $ 1.3872\ldots\times 10^{-59}$ & $(1.10\pm 0.017)\times 10^{-23}$ & $8\times 10^{35}$\\
$100$ & $ 4.4834\ldots\times 10^{-75}$ & $(8.04\pm .018)\times 10^{-38}$ & $1.8\times 10^{37}$\\
$500$ & $ 1.0481\ldots\times 10^{-135}$ & $( 3.39\pm .02)\times 10^{-105}$ & $3.2\times 10^{30}$\\
$1000$ & $ 2.3594\ldots\times 10^{-167}$ & $(6.94\pm .02)\times 10^{-145}$ & $3\times 10^{22}$\\
$1500$ & $  2.0634\ldots\times 10^{-187}$ & $(4.04\pm .03)\times 10^{-171}$ & $2\times 10^{16}$\\
$2500$ & $   2.6294\ldots\times 10^{-214}$ & $(2.94\pm .04)\times 10^{-207}$ & $1.1\times 10^7$\\
$3500$ & $   5.1912\ldots\times 10^{-233}$ & $(5.45\pm .04)\times 10^{-233}$ & $0.05$
\end{tabular}
\end{center}
\label{tab:asymp}
\end{table}

To 
illustrate the quality of the asymptotic approximation take the 
instances in Table~\ref{tab:asymp}, where $d=10$ and $\v\nu=\v 0, \Sigma= 0.25^2\times(0.9\times\v 1\v 1^\top+(1-0.9)\times\m I)$. The table shows the asymptotic value $\ell_\mathrm{as}$ for different $\gamma$ (second column), together with its relative deviation from the true $\ell$ (last column).
The table also displays  $\hat\ell$ with its approximate 95\% confidence interval  based on $n=10^6$ independent replications of our method in Section~\ref{sec:new}.

We conclude from Table~\ref{tab:asymp}  that the asymptotic approximation is useless for moderate values of $\gamma$
(deviating from the true value of $\ell$ by as much as $10^{37}$), and only becomes useful for extremely small probabilities (smaller than $10^{-233}$).


In fact, for the numerical experiment above it can be shown (see \cite{kortschak2014second}) that the \emph{second order} asymptotic term is:
\[
\begin{split}
\ell(\gamma)-\ell_\mathrm{as}(\gamma)&=\ell(\gamma)-d\overline\Phi(\ln\gamma)\simeq d (d-1)\exp((1-\rho^2)/2)\frac{\ln(\gamma)}{\gamma^{1-\rho}}\overline\Phi(\ln\gamma)
\end{split}
\]
In other words, the relative error of the asymptotic approximation,
\[
\frac{\ell(\gamma)-\ell_\mathrm{as}(\gamma)}{\ell_\mathrm{as}(\gamma)}=\Theta\left(\frac{\ln\gamma}{\gamma^{1-\rho}}\right),
\]
decays at a polynomial rate, and this rate can be extremely slow when $\rho$ is close to unity. Even worse, with $\rho=0.99$ the term $\frac{\ln\gamma}{\gamma^{1-\rho}}$ \emph{increases} as a function of $\gamma$ for values up to $\gamma\leq \exp(1/(1-\rho))=\exp(100)\approx 10^{43}$, and only starts decaying  to zero for $\gamma>\exp(100)$.

%
%
 %

\subsection{Exponentially Tilted Estimator}
\label{sec:new}
Given the failure of  the estimators described above, a natural question arises. What kind of estimator will succeed in being  both theoretically efficient and 
exhibit low variance in practical simulations? 

To answer this question we start by examining 
the quite natural proposal of Gulisashvili  and Tankov (GT)  \cite[Equation (70)]{gulisashvili2016tail}, which  
 can be written as follows
: 
\begin{equation}
\label{GT right tail}
\hat\ell_\mathrm{GT}\idef\exp\left(\frac{\v\mu^\top\Sigma^{-1}\v\mu}{2}-\v\mu^\top \Sigma^{-1}(\v Y-\v\nu)\right)\bb I\{S\geq \gamma\}, \qquad \v Y\sim  \mathsf{N}(\v \mu+\v\nu,\Sigma), 
\end{equation}
where the parameter $\v\mu$ is chosen by minimizing an asymptotic approximation to the second moment \cite[Equation (71)]{gulisashvili2016tail}.

Unfortunately, \eqref{GT right tail} also performs  poorly, just like the estimators  in the last section.\footnote{We remark that the GT estimator applies to the more general setting of  sums \emph{and differences} of log-normals. This  generality of the GT estimator, however, comes at the cost of not being the most efficient estimator for sums --- the case we consider here.} 
 This poor practical performance is compounded by the fact that  there is no proof of the asymptotic efficiency of 
\eqref{GT right tail} as $\gamma\uparrow\infty$ (see \cite[Page 40]{gulisashvili2016tail}). 

The reason why the  estimator \eqref{GT right tail} performs poorly is that it uses a \emph{single} exponential tilting parameter $\v\mu$, which is insufficient to induce the mutually-exclusive mode of occurrence of the rare-event: $\bb P(X_k=M|S>\gamma)\simeq \frac{\bb P(X_k>\gamma)}{\ell}$ (see part 1 of Lemma~\ref{lem}). In other words, with asymptotic probability $\bb P(X_k>M)/\ell$, each $X_k$ is the maximal term that causes the sum  to up-cross $\gamma$, and the single exponential tilting parameter $\v\mu$ in \eqref{GT right tail} cannot account for this mutually-exclusive behavior.

 Instead, to obtain a provably efficient estimator with excellent practical and theoretical performance, we must
introduce $d$ distinct exponential tilting parameter vectors $\v\mu_1,\ldots,\v\mu_d$, where  each  $\v\mu_k$ is tasked to deal with the event $\{S>\gamma,X_k=M\}$. The new set of $d$ tilting parameters are also determined using an error estimate different from the one used in \eqref{GT right tail} when we have a single tilting parameter. Thus, our proposal uses an estimator of the form \eqref{GT right tail} for each term, $\hbar_i(\gamma)$, in the decomposition:
\[
\ell(\gamma)=\sum_{i=1}^d \underbrace{\bb P(S>\gamma, X_i=M)}_{\hbar_i(\gamma)}.
\]
%
The estimator of  each $\hbar_k$ based on one replication is 
\begin{equation}
\label{new est}
\hat\hbar_k(\gamma)=\exp\left(\frac{\v\mu_k^\top\Sigma^{-1}\v\mu_k}{2}-\v\mu_k^\top \Sigma^{-1}(\v Y-\v \nu)\right)\bb I\{S>\gamma, X_k=M\},
\end{equation}
where under the measure $\bb P_{\v\mu_k}$ with expectation
$\bb E_{\v\mu_k}$ we have  $\ln \v X=\v Y\sim \mathsf{N}(\v\nu+\v\mu_k,\Sigma)$, and 
$\v\mu_k$ is  the solution to the non-linear optimization:
\begin{equation}
\label{mu}
\begin{split}
\min_{\v\mu}\;\frac{1}{2}\v\mu^\top\Sigma^{-1}\v\mu,\quad \textrm{subject to:}&\\
 \textstyle g_1(\v\mu)=\exp(\mu_k+\nu_k)+\sum_{i\not =k}\exp(\mu_i+\nu_i+\frac{\sigma_i^2}{2})-\gamma&\geq 0\\
\textstyle  g_2(\v\mu)=\mu_k+\nu_k+\frac{\sigma_k^2}{2}-\max_{j\not =k}\{\mu_j+\nu_j+\frac{\sigma_j^2}{2}\}&\geq 0
\end{split}
\end{equation}
To construct the overall estimator $\hat\ell$ of $\ell$, we can use stratification with a total computing budget of $n=n_1+\cdots+n_d$ replications, whereby we allocate $n_k$ samples  to estimate each $\hbar_k$ independently. We then take the sum of the estimators of $\hbar_k$'s  as our stratified estimator of $\ell$. In other words,
\begin{equation}
\label{stratified est}
\hat\ell(\gamma)=\sum_{k=1}^d\frac{1}{n_k}\sum_{j=1}^{n_k}\hat\hbar_{k,j}(\gamma),
\end{equation}
where $\sum_k n_k=n$ and $\hat\hbar_{k,1},\ldots,\hat\hbar_{k,n_k}$ are iid copies of \eqref{new est}. We then  have the following efficiency result.
\begin{thm}[Logarithmic Efficiency]
Suppose we select the
stratified allocation such that $n_i\varpropto n\times \bb P(X_i>\gamma)$. 
Then, the estimator \eqref{stratified est} is  unbiased and logarithmically efficient with relative error $\var(\hat\ell(\gamma))/\ell^2(\gamma)=\c O(\ln\gamma)$ as $\gamma\uparrow\infty$.
\end{thm}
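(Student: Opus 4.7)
The plan proceeds in three stages. First, unbiasedness follows immediately from the importance-sampling identity: since $\v Y\sim\mathsf{N}(\v\nu+\v\mu_k,\Sigma)$ under $\bb P_{\v\mu_k}$, the Radon-Nikodym derivative $\m d\bb P/\m d\bb P_{\v\mu_k}$ equals $L_k\idef\exp\bigl(\tfrac{1}{2}\v\mu_k^\top\Sigma^{-1}\v\mu_k-\v\mu_k^\top\Sigma^{-1}(\v Y-\v\nu)\bigr)$, so $\bb E_{\v\mu_k}[\hat\hbar_{k,j}]=\hbar_k$ and $\bb E[\hat\ell]=\sum_k\hbar_k=\ell$. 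Independence of the strata together with the prescribed allocation $n_k=n\,\bb P(X_k>\gamma)/\ell_\mathrm{as}(\gamma)$ give
\[
\var(\hat\ell(\gamma))=\sum_{k=1}^d\frac{\var_{\v\mu_k}(\hat\hbar_{k,1})}{n_k}\le \frac{\ell_\mathrm{as}(\gamma)}{n}\sum_{k=1}^d\frac{\bb E_{\v\mu_k}[\hat\hbar_{k,1}^2]}{\bb P(X_k>\gamma)}.
\]
Since $\ell_\mathrm{as}\simeq\ell$ and $\hbar_k\simeq\bb P(X_k>\gamma)$ by item~1 of Lemma~\ref{lem}, the target bound $\var(\hat\ell)/\ell^2=\c O(\ln\gamma/n)$ reduces to verifying $\bb E_{\v\mu_k}[\hat\hbar_{k,1}^2]=\c O(\hbar_k^2\ln\gamma)$ for each stratum $k$.

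Second, I would bound each stratum's second moment via a change of measure back to $\bb P$: $\bb E_{\v\mu_k}[\hat\hbar_{k,1}^2]=\bb E[L_k\,\bb I\{A_k\}]$ with $A_k=\{S>\gamma, X_k=M\}\subseteq\{Y_k>\ln\gamma-\ln d\}$. Factoring out the deterministic part and applying an Esscher/exponential tilt to the Gaussian $\v Y-\v\nu\sim\mathsf{N}(\v 0,\Sigma)$ under $\bb P$ yields
\[
\bb E[L_k\bb I\{A_k\}]\le \exp\bigl(\tfrac{1}{2}\v\mu_k^{*\top}\Sigma^{-1}\v\mu_k^*\bigr)\,\bb E\bigl[\exp(-\v\mu_k^{*\top}\Sigma^{-1}(\v Y-\v\nu))\,\bb I\{Y_k>\ln\gamma-\ln d\}\bigr],
\]
and, since the tilt shifts the mean of $Y_k$ to $\nu_k-\mu_k^*$, the remaining expectation reduces to a univariate Mills-ratio-type tail $\overline\Phi\bigl((2\ln\gamma+\c O(1))/\sigma_k\bigr)\cdot\exp(\tfrac{1}{2}\v\mu_k^{*\top}\Sigma^{-1}\v\mu_k^*)$.

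Third, I would close the bound by analyzing the KKT system for the convex program \eqref{mu}. Writing $\Sigma^{-1}\v\mu_k^*=\lambda_1\nabla g_1(\v\mu_k^*)+\lambda_2\nabla g_2(\v\mu_k^*)$ with $\lambda_i\ge 0$, an asymptotic analysis as $\gamma\uparrow\infty$ establishes (i) that $g_1$ is the binding constraint; (ii) that $\exp(\mu_k^*+\nu_k)\simeq\gamma$ dominates the sum in $g_1$ while the remaining terms are $o(\gamma)$, forced by minimality of the quadratic together with $g_2\ge 0$; and (iii) consequently $\v\mu_k^*=[(\ln\gamma-\nu_k)/\sigma_k^2]\,\Sigma_{:,k}\,(1+o(1))$, yielding $\tfrac{1}{2}\v\mu_k^{*\top}\Sigma^{-1}\v\mu_k^*=(\ln\gamma-\nu_k)^2/(2\sigma_k^2)+o((\ln\gamma)^2)$. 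Combining with the bound from stage~2 and the refined tail asymptotic $\hbar_k^2=\Theta\bigl(\exp(-(\ln\gamma-\nu_k)^2/\sigma_k^2)/(\ln\gamma)^2\bigr)$ from item~1 of Lemma~\ref{lem} yields $\bb E_{\v\mu_k}[\hat\hbar_{k,1}^2]=\c O(\hbar_k^2\ln\gamma)$; substitution into the variance decomposition then completes the proof, with logarithmic efficiency as an immediate corollary. The main obstacle is stage~3: pinning down the minimizer $\v\mu_k^*$ to \emph{sufficient precision} that the multiplicative slack between $\exp(\v\mu_k^{*\top}\Sigma^{-1}\v\mu_k^*)$ and $1/\hbar_k^2$ is only polylogarithmic in $\gamma$ (rather than polynomial in $\gamma$), which in turn relies crucially on the refined Mills-ratio correction supplied by item~1 of Lemma~\ref{lem}.
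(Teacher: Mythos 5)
Your stage 1 matches the paper's proof of the theorem exactly (stratified variance decomposition with $n_k\varpropto\bb P(X_k>\gamma)$, reduction to the per-stratum bound $\bb E_{\v\mu_k}[\hat\hbar_k^2]=\c O(\hbar_k^2\ln\gamma)$), and your stage 3 reproduces the content of part 2 of Lemma~\ref{lem} (the KKT asymptotics giving $\v\mu_k^*=\frac{\ln\gamma-\nu_k}{\sigma_k^2}\Sigma\v e_k$ and $\frac12\v\mu_k^{*\top}\Sigma^{-1}\v\mu_k^*=\frac{(\ln\gamma-\nu_k)^2}{2\sigma_k^2}$). The gap is in stage 2. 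Replacing $A_k=\{S>\gamma,X_k=M\}$ by the superset $\{Y_k>\ln\gamma-\ln d\}$ is too lossy for the rate you need: after the tilt, the threshold that appears in the univariate tail is $2(\ln\gamma-\nu_k)-\ln d$ rather than $2(\ln\gamma-\nu_k)$, and since $\overline\Phi(x-c)/\overline\Phi(x)\simeq e^{cx}$ grows without bound, the constant shift $\ln d$ inside the argument costs a multiplicative factor of order $\gamma^{2\ln d/\sigma_k^2}$, i.e.\ polynomial in $\gamma$ for $d\ge 2$. Your bound therefore gives $\bb E_{\v\mu_k}[\hat\hbar_k^2]=\c O\bigl(\hbar_k^2\,\ln\gamma\,\gamma^{2\ln d/\sigma_k^2}\bigr)$, which suffices for logarithmic efficiency in the $\liminf$ sense (the polynomial factor is washed out because $\ln\ell=-\Theta(\ln^2\gamma)$), but it does not deliver the relative-error rate $\var(\hat\ell)/\ell^2=\c O(\ln\gamma)$ that the theorem asserts and that you explicitly set as your target; that sharper rate is also what feeds the second-order efficiency corollary. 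Note also that the polynomial loss comes from this inclusion, not from the precision of $\v\mu_k^*$, which is where your closing sentence locates the risk.

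What the paper does instead, and what your plan is missing, is the sharp identification of the event probability under the back-tilted measure: writing $\bb E_{\v\mu_k}[\hat\hbar_k^2]=\exp(\v\mu_k^\top\Sigma^{-1}\v\mu_k)\,\bb P_{-\v\mu_k}(S>\gamma,X_k=M)$ and then proving $\bb P_{-\v\mu_k^*}(S>\gamma,X_k=M)\simeq\bb P_{-\v\mu_k^*}(Y_k>\ln\gamma)=\overline\Phi\bigl(2(\ln\gamma-\nu_k)/\sigma_k\bigr)$ up to a factor $1+o(1)$, i.e.\ with the exact threshold $\ln\gamma$. This requires the two auxiliary facts proved in the appendix: $\bb P(S>\gamma,X_k=M<\gamma)=o(\bb P(X_k>\gamma))$ (Lemma~\ref{Lem 2}, via a $\gamma^\beta$ cutoff on the second-largest term) and the exponential asymptotic independence of correlated Gaussian tails (Lemma~\ref{lem3}), both of which survive the tilt because the tilt asymptotically modifies only the marginal of $Y_k$. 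To repair your argument you would need to replace the crude inclusion by this kind of $1+o(1)$ statement; otherwise the claimed $\c O(\ln\gamma)$ relative error does not follow.
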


\begin{proof} First note that  choosing  
$n_k=n\times \bb P(X_k>\gamma)/\ell_\mathrm{as}$ satisfies the constraint $n=\sum_k n_k$, but  is in conflict with
the constraint that the $n_k$'s have to be integers.
One simple  solution is to simply round up to the nearest integer,  and violate the constraint $n=\sum_k n_k$. For large enough $n$, the residual $n-\sum_k n_k$ will be negligible.  
Another solution, which we adopt in our computer implementation, is to use  a widely-used \emph{randomized}  stratification scheme, as described in, for example, \cite[Algorithm 14.2]{kroese2011handbook}.  
 
Next, with the above allocation for each $n_k$, the variance of the  stratified estimator \eqref{stratified est} is:
\begin{equation*}
\var(\hat\ell)=\frac{1}{n}\sum_{k=1}^d \frac{n}{n_k} \var(\hat\hbar_k)=\frac{\ell_\mathrm{as}}{n}\sum_{k=1}^d \frac{\var(\hat\hbar_k)}{\bb P(X_k>\gamma)}
\end{equation*}
Therefore, using the result 3. in Lemma~\ref{lem},  the relative error of $\hat\ell$ as $\gamma\uparrow\infty$ is 
\begin{equation}
\begin{split}
\frac{n\var(\hat\ell(\gamma))}{\ell^2(\gamma)}&\simeq\frac{n\var(\hat\ell(\gamma))}{\ell^2_\mathrm{as}(\gamma)}=\frac{1}{\ell_\mathrm{as}(\gamma)}\sum_{k=1}^d\bb P(X_k>\gamma)\frac{\var(\hat\hbar_k)}{[\bb P(X_k>\gamma)]^2}=\c O (\ln\gamma).
\end{split}
\end{equation}
The last equation  shows that $\frac{\ln\var(\hat\ell^2(\gamma))}{\ln\ell(\gamma)}\rightarrow 2$ as $\gamma\uparrow\infty$.
\end{proof}
We can now see that the rate of growth of the relative error of our estimator, namely $\c O(\ln\gamma)$, is significantly slower than
 the rate of growth of the variance boosted estimator, $\c O([\ln\gamma]^{d/2+1}\gamma^{1/4})$.

%
%
%
%
%

Since the proof of the following lemma is  long, it is delegated to the appendix.
\begin{lem}[Asymptotics for $\hat\hbar_k$]
\label{lem} As $\gamma\uparrow\infty$, we have that:
\begin{enumerate}
\item
$
\bb E_{\v\mu}[\hat\hbar_k(\gamma)]=\bb P(S>\gamma,X_k=M)\simeq \bb P(X_k>\gamma).
$
\item The asymptotic solution to \eqref{mu} is
\[
\v\mu^*=\frac{\ln(\gamma)-\nu_k}{\sigma_k^2}\Sigma\v e_k,
\]
where $\v e_k$ is the unit vector with $1$ in the $k$-th position.
\item We have $\frac{\var(\hat\hbar_k)}{[\bb P(X_k>\gamma)]^2}=\c O(\ln\gamma)$, and with $\v\mu$ solving \eqref{mu} the $(m+1)$-st moment satisfies:
\begin{equation}
\label{mth moment}
 \bb E_{\v\mu} \hat\hbar^{m+1}_k=\Theta(\ln^m(\gamma)\hbar_k^{m+1})
\end{equation}
\end{enumerate}
\end{lem}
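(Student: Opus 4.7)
The plan is to prove the three items in sequence, with item~3 being the most technical. Throughout I write $a \idef (\ln\gamma - \nu_k)/\sigma_k$, so that $\bb P(X_k > \gamma) = \overline\Phi(a)$ and $a = \Theta(\ln\gamma)$ as $\gamma\uparrow\infty$.

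\emph{Item~1.} The equality $\bb E_{\v\mu}[\hat\hbar_k(\gamma)] = \bb P(S > \gamma, X_k = M)$ is immediate since the weight in~\eqref{new est} is exactly the Radon-Nikodym derivative $\m d\bb P/\m d\bb P_{\v\mu}$. For the asymptotic equivalence $\bb P(S > \gamma, X_k = M) \simeq \bb P(X_k > \gamma)$ I would invoke the principle of a single big jump, splitting
\[
\bb P(S > \gamma, X_k = M) = \bb P(X_k > \gamma, X_k = M) + \bb P(X_k \le \gamma, X_k = M, S > \gamma).
\]
The second term is bounded by $\bb P(S > \gamma, \max_i X_i \le \gamma)$, which is $o(\bb P(X_k > \gamma))$ by subexponentiality of the log-normal distribution; the first term satisfies $\bb P(X_k > \gamma) - \bb P(X_k > \gamma, X_k = M) \le \sum_{j \neq k}\bb P(X_k > \gamma, X_j > \gamma)$, with each summand $o(\bb P(X_k > \gamma))$ because the joint tail of a bivariate Gaussian with $|\rho| < 1$ decays strictly faster than its marginals.

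\emph{Item~2.} I would analyse the KKT conditions for the program~\eqref{mu}. Since $\v\mu = \v 0$ is infeasible for large $\gamma$ (then $g_1 < 0$), at least one constraint is active at the optimum. The key structural observation is that the cheapest way, in the $\Sigma^{-1}$-norm, to make $g_1 \ge 0$ is to let the $X_k$-term alone carry the load by setting $\mu_k \approx \ln\gamma - \nu_k$: using any $\mu_j$ with $j \neq k$ to lift $\exp(\mu_j + \nu_j + \sigma_j^2/2)$ to order $\gamma$ would both inflate $\|\v\mu\|_{\Sigma^{-1}}$ and violate $g_2$. Once $\mu_k$ is pinned to $\ln\gamma - \nu_k$, minimising $\tfrac{1}{2}\v\mu^\top\Sigma^{-1}\v\mu$ over the remaining coordinates is the standard quadratic problem whose solution is the Gaussian conditional-mean formula $\mu_j^* = (\mu_k/\sigma_k^2)\Sigma_{jk}$, producing exactly $\v\mu^* = \frac{\ln\gamma - \nu_k}{\sigma_k^2}\Sigma\v e_k$. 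A direct substitution then confirms $g_2(\v\mu^*) \ge 0$ asymptotically under the dominance convention on the index $k$.

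\emph{Item~3.} Exponential change of measure gives
\[
\bb E_{\v\mu^*}[\hat\hbar_k^{m+1}] = \bb E[L^m\,\bb I\{S > \gamma, X_k = M\}].
\]
With $\v\mu^*$ from item~2, $\Sigma^{-1}\v\mu^* = (a/\sigma_k)\v e_k$, so $L$ depends on $\v Y$ only through $Y_k$: setting $t \idef (Y_k - \ln\gamma)/\sigma_k$, one has $\ln L = -a^2/2 - at$, and $t \sim \mathsf{N}(-a, 1)$ under $\bb P$. For the upper bound I would relax the constraint to $\bb I\{t > 0\}$ (equivalently $X_k > \gamma$, which on $\{X_k = M\}$ already implies $S > \gamma$), complete the square in the resulting one-dimensional Gaussian integral, and apply Mills' ratio $\overline\Phi(x) \simeq \phi(x)/x$ to obtain order $a^{-1}e^{-(m+1)a^2/2}$; the contribution from $\{t \le 0\}$ is subdominant by the same single-big-jump estimate of item~1. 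For the matching lower bound, I would restrict the integrand to $\{t > 0\}$ and use that the conditional Gaussian distribution of $\v Y_{-k}$ given $Y_k = \ln\gamma + \sigma_k t$ has mean shifted by $O(\ln\gamma)$ with bounded variance, so that $\bb P(X_k = M \mid Y_k = \ln\gamma + \sigma_k t) \to 1$ uniformly in $t \ge 0$. Combining upper and lower bounds, and using $\hbar_k \simeq \overline\Phi(a) \simeq \phi(a)/a$ together with $a = \Theta(\ln\gamma)$, gives
\[
\bb E_{\v\mu^*}[\hat\hbar_k^{m+1}] = \Theta\bigl(a^{-1}e^{-(m+1)a^2/2}\bigr) = \Theta\bigl((\ln\gamma)^m\,\hbar_k^{m+1}\bigr),
\]
and specialising to $m=1$ yields $\var(\hat\hbar_k)/\hbar_k^2 = \c O(\ln\gamma)$.

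The main obstacle will be the uniform-in-$t$ concentration underlying the lower bound in item~3: the correlation structure $\Sigma$ couples $Y_k$ to the other $Y_j$, and one must quantify that $\bb P(X_k = M \mid Y_k \ge \ln\gamma) = 1 - o(1)$ at a rate that, when integrated against $e^{-(m+1)at}\phi(t + a)$, still delivers the correct polynomial prefactor $(\ln\gamma)^m$. This is where the dominance properties of $k$ enter together with the specific form of $\v\mu^*$ from item~2.
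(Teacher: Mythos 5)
Your overall route coincides with the paper's: item 1 via a single-big-jump decomposition plus asymptotic independence of bivariate Gaussian tails, item 2 via KKT with the $k$-th coordinate carrying the active constraint and the remaining coordinates given by the conditional-mean formula, and item 3 via an exponential change of measure that reduces everything to a one-dimensional computation in $Y_k$ (your $t$-integral is exactly the paper's representation $\exp\bigl(\tfrac{(m^2+m)(\ln\gamma-\nu_k)^2}{2\sigma_k^2}\bigr)\,\bb P_{-m\v\mu^*}(S>\gamma,X_k=M)$ written out), finished with Mills' ratio. However, there is a genuine gap in your item 1, and it propagates into item 3. You bound $\bb P(X_k\le\gamma,\,X_k=M,\,S>\gamma)$ by $\bb P(S>\gamma,\,\max_i X_i\le\gamma)$ and assert the latter is $o(\bb P(X_k>\gamma))$ ``by subexponentiality.'' First, subexponentiality is an i.i.d.-sum notion and does not by itself yield this estimate for dependent components with a Gaussian copula; that is precisely why the paper proves a dedicated result (its Lemma~2, via the $\gamma^\beta$ truncation together with the bivariate tail bound of Lemma~3), and why Corollary~\ref{cor:asymp} is derived \emph{from} the present lemma rather than assumed. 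Second, and more seriously, the claim is false in general for the specific index $k$: with heterogeneous variances ($\sigma_k<\max_j\sigma_j$), second-order asymptotics give $\bb P(S>\gamma,\,M\le\gamma)=\Theta\bigl(\gamma^{-1}\ln\gamma\cdot\overline\Phi((\ln\gamma-\nu_{j^*})/\sigma_{\max})\bigr)$ for a maximal-variance index $j^*$, and the ratio of this to $\bb P(X_k>\gamma)=\overline\Phi((\ln\gamma-\nu_k)/\sigma_k)$ diverges, since the Gaussian factor $\exp\bigl(\tfrac{\ln^2\gamma}{2}(\sigma_k^{-2}-\sigma_{\max}^{-2})\bigr)$ beats the polynomial-in-$\gamma$ prefactor. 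The constraint $X_k=M$, which you discard, is essential: it forces $Y_k\ge\ln\gamma-\ln d$, and the paper's Lemma~2 exploits exactly this to get the $k$-specific estimate $\bb P(S>\gamma,\,X_k=M<\gamma)=o(\bb P(X_k>\gamma))$.

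The same issue resurfaces in your item 3 when you dismiss the $\{t\le 0\}$ contribution ``by the same single-big-jump estimate of item 1'': there the weight $L^m=e^{-ma^2/2-mat}$ grows exponentially in $|t|$ on $t<0$, so a crude probability bound multiplied by a bounded weight does not suffice; one needs the $k$-specific estimate under the tilted law. The paper handles this by observing that $\bb P_{-m\v\mu^*}$ only shifts the marginal of $Y_k$ while leaving the conditional law of $\v Y_{-k}$ given $Y_k$ unchanged, so the part-1 argument (Lemma~2 and Lemma~3) can be repeated under the new measure to give $\bb P_{-m\v\mu^*}(S>\gamma,X_k=M)\simeq\overline\Phi\bigl(\tfrac{(m+1)(\ln\gamma-\nu_k)}{\sigma_k}\bigr)$; you correctly identify the uniform-in-$t$ concentration as the crux, but the justification you offer does not close it. A minor further point on item 2: there is no ``dominance convention on the index $k$'' in the lemma --- the constraint $g_2$ is what enforces dominance under the tilted measure, and the feasibility/one-active-constraint check for the asymptotic $\v\mu^*$ is done by direct substitution into the KKT system, as in the paper.
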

 Note that part 1. of the above lemma immediately  yields the following corollary, which was originally proved in \cite{asmussen2008asymptotics} using 
a  different argument. 
\begin{corollary}[Right-Tail Asymptotics] $\ell(\gamma)\simeq \sum_{i=1}^d\bb P(X_k>\gamma)$ as $\gamma\uparrow\infty$.
\label{cor:asymp}
\end{corollary}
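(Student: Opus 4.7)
The plan is to derive the corollary as an essentially immediate consequence of part 1 of Lemma~\ref{lem}. The first step is to write $\ell(\gamma)=\bb P(S>\gamma)$ as a finite sum over the disjoint events $\{X_k=M\}$. Since $\v X$ has a continuous (log-normal) joint distribution, ties in the maximum occur with probability zero, so the events $\{X_1=M\},\ldots,\{X_d=M\}$ partition the sample space up to a null set. Hence
\[
\ell(\gamma) \;=\; \sum_{k=1}^d \bb P(S>\gamma,\,X_k=M) \;=\; \sum_{k=1}^d \hbar_k(\gamma).
\]

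The second step is to invoke part 1 of Lemma~\ref{lem}, which states that $\hbar_k(\gamma)\simeq\bb P(X_k>\gamma)$ as $\gamma\uparrow\infty$ for each fixed $k$. Substituting into the decomposition above then yields the desired asymptotic after arguing that a finite sum of asymptotically equivalent positive sequences is itself asymptotically equivalent to the sum of their equivalents. Concretely, fix $\epsilon>0$; by part 1 of Lemma~\ref{lem}, there exists $\gamma_0$ such that for $\gamma\geq\gamma_0$ we have $(1-\epsilon)\bb P(X_k>\gamma)\leq\hbar_k(\gamma)\leq(1+\epsilon)\bb P(X_k>\gamma)$ simultaneously for all $k=1,\ldots,d$. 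Summing these bounds over $k$ and dividing by $\sum_k\bb P(X_k>\gamma)=\ell_\mathrm{as}(\gamma)>0$ gives
\[
1-\epsilon \;\leq\; \frac{\ell(\gamma)}{\ell_\mathrm{as}(\gamma)} \;\leq\; 1+\epsilon,
\]
and since $\epsilon$ is arbitrary the ratio tends to $1$, which is precisely the claim.

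The only potentially non-trivial ingredient is part 1 of Lemma~\ref{lem} itself, and that is proved separately in the appendix. At the level of the corollary there is no real obstacle: the two things to verify are the almost-sure disjointness of the stratification on $\{X_k=M\}$ (trivial from continuity of the log-normal law) and the elementary observation that uniform per-$k$ convergence of the ratios $\hbar_k/\bb P(X_k>\gamma)$ transfers to the finite sum. Thus the corollary reduces in a few lines to Lemma~\ref{lem}, which is why the authors present it as an immediate consequence rather than prove it from scratch.
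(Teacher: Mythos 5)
Your proof is correct and is essentially the paper's own argument: the paper presents the corollary as an immediate consequence of part 1 of Lemma~\ref{lem} applied to the decomposition $\ell(\gamma)=\sum_{k}\bb P(S>\gamma,X_k=M)$, and you simply spell out the two trivial supporting facts (ties in the maximum are null since $\Sigma$ is nonsingular, and finitely many asymptotic equivalences of positive terms sum to an equivalence). Nothing further is needed.
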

More importantly, part 3. of  Lemma~\ref{lem} gives us a robustness guarantee that is not enjoyed by any of the competing estimators. 

\begin{corollary}[Logarithmically Efficient Variance Estimator]
\label{cor:var}
Let $
S_{n_k}^2$
 be the sample variance based on $n_k$ independent replications of \eqref{new est}. Then, $S_n^2$ is a logarithmically efficient estimator:
 \[
\liminf_{\gamma\uparrow\infty} \frac{\ln \var(S_{n_k}^2)}{\ln\var(\hat\hbar_k)}=2,
\]
where the rate of growth is  $\frac{\var(S_{n_k}^2)}{\var^2(\hat\hbar_k)}=\c O(\ln\gamma)$.
\end{corollary}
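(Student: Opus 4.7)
The plan is to reduce the problem to the moment estimates already contained in item 3 of Lemma~\ref{lem}. The starting point is the standard identity for the variance of the sample variance of i.i.d.\ random variables $\hat\hbar_{k,1},\ldots,\hat\hbar_{k,n_k}$:
\[
\var(S_{n_k}^2)=\frac{1}{n_k}\left(\mu_4(\hat\hbar_k)-\frac{n_k-3}{n_k-1}\mu_2^2(\hat\hbar_k)\right),
\]
where $\mu_j$ denotes the $j$-th central moment under $\bb P_{\v\mu}$. For fixed $n_k$ this gives $\var(S_{n_k}^2)=\c O(\mu_4(\hat\hbar_k))$ as $\gamma\uparrow\infty$. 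I would then bound central by raw moments through a binomial expansion, using the elementary fact $\mu_4(\hat\hbar_k)\leq 8\bb E_{\v\mu}\hat\hbar_k^4+8(\bb E_{\v\mu}\hat\hbar_k)^4$, which is enough since the second term is negligible compared with the first under the asymptotics below.

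Next I would substitute the three instances $m=1,2,3$ of the moment estimate \eqref{mth moment}:
\[
\bb E_{\v\mu}\hat\hbar_k^{2}=\Theta(\ln\gamma\cdot\hbar_k^{2}),\qquad \bb E_{\v\mu}\hat\hbar_k^{4}=\Theta(\ln^{3}\gamma\cdot\hbar_k^{4}).
\]
Since $\hbar_k\simeq\bb P(X_k>\gamma)\to 0$, one has $\var(\hat\hbar_k)=\bb E_{\v\mu}\hat\hbar_k^{2}-\hbar_k^{2}=\Theta(\ln\gamma\cdot\hbar_k^{2})$ because the $\hbar_k^{2}$ term is absorbed into the $\Theta$. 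Combining with the identity above yields
\[
\var(S_{n_k}^2)=\c O\bigl(\ln^{3}\gamma\cdot\hbar_k^{4}\bigr),\qquad \var^{2}(\hat\hbar_k)=\Theta\bigl(\ln^{2}\gamma\cdot\hbar_k^{4}\bigr),
\]
so their ratio is $\c O(\ln\gamma)$, which proves the stated rate of growth.

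Finally, I would verify the logarithmic efficiency claim by taking logarithms:
\[
\ln\var(S_{n_k}^2)=4\ln\hbar_k+3\ln\ln\gamma+\c O(1),\qquad \ln\var(\hat\hbar_k)=2\ln\hbar_k+\ln\ln\gamma+\c O(1).
\]
Since $\hbar_k\simeq \overline\Phi((\ln\gamma-\nu_k)/\sigma_k)$ decays as $\exp(-\Theta(\ln^{2}\gamma))$, we have $|\ln\hbar_k|=\Theta(\ln^{2}\gamma)$, so $\ln\hbar_k$ dominates $\ln\ln\gamma$, and the ratio $\ln\var(S_{n_k}^2)/\ln\var(\hat\hbar_k)\to 2$. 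In fact the limit (not merely the $\liminf$) equals $2$.

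The proof is short because all the real work has been done in Lemma~\ref{lem}; the only mild subtlety is checking that $|\ln\hbar_k|$ dominates $\ln\ln\gamma$, which ensures that the $\ln\ln\gamma$ terms arising from the variance-boosting factor $\ln^{j}\gamma$ do not disturb the limiting ratio $2$. No step is expected to be genuinely difficult given Lemma~\ref{lem}.
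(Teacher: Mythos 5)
Your proposal follows essentially the same route as the paper: both start from the exact identity $\var(S_{n_k}^2)=\frac{1}{n_k}\bigl(\mu_4-\frac{n_k-3}{n_k-1}\mu_2^2\bigr)$ and then feed in the moment asymptotics \eqref{mth moment} from item 3 of Lemma~\ref{lem}, so the substance of the argument is identical. The paper differs only in bookkeeping: it expands the fourth \emph{central} moment term by term, $\bb E_{\v\mu}(\hat\hbar_k-\hbar_k)^4=\Theta(\ln^3\gamma\,\hbar_k^4)+\hbar_k^4+6\hbar_k^2\,\Theta(\ln\gamma\,\hbar_k^2)-4\hbar_k\,\Theta(\ln^2\gamma\,\hbar_k^3)-4\hbar_k^4$, and observes that the $\Theta(\ln^3\gamma)$ term dominates the subtracted $\Theta(\ln^2\gamma)$ term, which yields a genuinely two-sided estimate $\var(S_{n_k}^2)/\var^2(\hat\hbar_k)=\Theta(\ln\gamma)-1+\frac{2}{n_k-1}$.

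The one gap in your write-up is that your bound $\mu_4(\hat\hbar_k)\le 8\bb E_{\v\mu}\hat\hbar_k^4+8\hbar_k^4$ is one-sided, so you only establish $\var(S_{n_k}^2)=\c O(\ln^3\gamma\,\hbar_k^4)$; this proves the stated growth rate $\c O(\ln\gamma)$ and, after dividing by the negative quantity $\ln\var(\hat\hbar_k)$, gives $\liminf_{\gamma\uparrow\infty}\ln\var(S_{n_k}^2)/\ln\var(\hat\hbar_k)\ge 2$, but it does not justify the asserted equality $\ln\var(S_{n_k}^2)=4\ln\hbar_k+3\ln\ln\gamma+\c O(1)$ nor the claim that the limit equals $2$; for that you need a matching lower bound on $\var(S_{n_k}^2)$. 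This is easily repaired: either track the signs as the paper does, or note that $\mu_4\ge\mu_2^2$ always, so the identity gives $\var(S_{n_k}^2)\ge\frac{2}{n_k(n_k-1)}\var^2(\hat\hbar_k)$, whence $\ln\var(S_{n_k}^2)/\ln\var(\hat\hbar_k)\le 2+o(1)$ and the $\liminf$ (indeed the limit) equals $2$. With that one line added, your argument is complete and equivalent to the paper's.
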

\begin{proof} 
 Using \eqref{mth moment}, consider the following calculations:
\[
\begin{split}
\frac{n_k\var(S_{n_k}^2)}{\var^2(\hat\hbar_k)}&=\frac{\bb E_{\v\mu}(\hat\hbar_k(\gamma)-\hbar_k(\gamma))^4}{[\bb E_{\v\mu}(\hat\hbar_k(\gamma)-\hbar_k(\gamma))^2]^2}-1+\frac{2}{n_k-1}
\\
&=\frac{\Theta(\ln^3(\gamma)\hbar_k^{4})+
\hbar_k^{4}+\hbar_k^{2}\Theta(\ln(\gamma)\hbar_k^{2})-
\hbar_k\Theta(\ln^2(\gamma)\hbar_k^3)-
4\hbar^4_k}{[\Theta(\ln(\gamma)\hbar_k^2)-\hbar_k^2]^2}-1+\frac{2}{n_k-1}\\
&=\frac{\Theta(\ln^3\gamma)+\Theta(\ln\gamma)-\Theta(\ln^2\gamma)-
3}{[\Theta(\ln\gamma)-1]^2}-1+\frac{2}{n_k-1}\\
&=\Theta(\ln(\gamma))-1+\frac{2}{n_k-1}
\end{split}
\]
\end{proof}
Therefore, a major advantage of our proposed estimator \eqref{stratified est} is that 
estimating its variance  is asymptotically not more difficult than estimating  $\ell$ itself. 

In retrospect, we can see that  the excellent theoretical properties of our estimator are due mainly
 to the breaking of the symmetry in the sum 
$S=X_1+\cdots+X_d$ by distinguishing each and every  $X_i$ as the overall maximum.  In contrast, the poorly performing estimators \eqref{GT right tail} and \eqref{residual} (and hence $\hat\ell_\mathrm{ISVE}$)
both induce a simple change of measure that does not conform to the mutually-exclusive asymptotic behavior of $\bb P(S>\gamma,X_k=M)\simeq\bb P(X_k>\gamma),\;k=1,\ldots,d$.

Finally, we remark on the unusual way of selecting $\v\mu$ via the optimization \eqref{mu}. Why do we not simply use the asymptotic approximation $\v\mu^*$ in
Lemma~\ref{lem}? The answer is that, while asymptotically the matrix $\Sigma$ is irrelevant, it is still relevant for very large values of $\gamma$, and our change of measure should reflect this dependence.
The asymptotic solution $\v\mu^*$ does not reflect this dependence. 
Thus, \eqref{mu} was designed with two objectives in mind:
(1)  good practical performance for finite $\gamma<\infty$, where the full $\Sigma$ is relevant; (2)  asymptotic optimality as $\gamma\uparrow\infty$, where $\Sigma$ is irrelevant. The optimization program \eqref{mu} transitions from objective (1) to objective (2)  in a continuous way.

\subsection{Numerical Comparison}
In this section we show that the superior theoretical properties of \eqref{stratified est} convert into
excellent practical performance. In fact, the numerical simulations suggest that our proposed estimator is the only one capable of estimating $\ell$ in many settings  of practical interest. 

\subsubsection{Comparison with ISVE estimator}
  Consider estimating $\ell(\gamma)$ with
	\[
	d=30,\rho=0.9,\;\v\nu=\v 0,\;\Sigma= 0.25^2\times(\rho\times\v 1\v 1^\top+(1-\rho)\times\m I)
	\]
and different values of  $\gamma$. Table~\ref{tab:ISVE compare} gives the results using $n=10^6$ replications.  For the ISVE estimator we attempted to optimize the performance of the estimator by manually selecting the best possible $\theta$. Our choice for this tuning parameter 
is thus given in brackets in the third column.

\begin{table}[H]
\caption{Comparative performance of the ISVE and exponentially tilted estimators with $\rho=0.9$.}
\label{tab:ISVE compare}
\begin{center}
\begin{tabular}{c|c|c|c|c|c|c}  
\multicolumn{3}{c}{} & \multicolumn{2}{|c}{relative error \%}& \multicolumn{2}{|c}{work normalized rel. var.}  \\
 \hline
$\gamma$& $\hat\ell$  & $\hat\ell_\mathrm{ISVE}$  & $\mathrm{RE}(\hat\ell)$ & $\mathrm{RE}(\hat\ell_\mathrm{ISVE})$ &  $\mathrm{WNRV}(\hat\ell)$ & $\mathrm{WNRV}(\hat\ell_\mathrm{ISVE})$  \\
\hline
40& 0.116 & 0.114 ($\theta=0.5$) &0.63 & 2.0 & 0.00032 & 0.00080 \\
100& $2.17\times 10^{-7}$ & $1.18\times 10^{-7}$  ($\theta=0.6$) &0.98 & 40&0.00061 & 0.31 \\ 150& $6.83\times 10^{-12}$ & $5.75\times 10^{-13}$  ($\theta=0.75$) &1.1 & 84&0.00093 & 1.12 \\
200& $ 7.75\times 10^{-16}$ & $2.09\times 10^{-17}$  ($\theta=0.8$) &1.2 & 95& 0.0010 & 1.22 \\
400& $6.57\times 10^{-28}$ & $3.08\times 10^{-39}$  ($\theta=0.9$) &1.4 & 80& 0.0011 & 1.34\\
$10^3$& $ 1.61\times 10^{-49}$ & $1.21\times 10^{-80}$  ($\theta=0.95$) &1.7 & 100& 0.002 & 2.02\\
$10^4$& $3.60\times 10^{-132}$ & $1.80\times 10^{-294}$  ($\theta=?$) &2.1 & -&  0.0024 & -\\
\end{tabular}
\end{center}
\end{table}
A number of conclusions can be drawn from the table.

First,  the ISVE estimator does not have acceptably low variance for both small $\gamma$ (when the event is not rare) and  for  large $\gamma$ (when the event is rare). 

Second, as with Figure~\ref{fig:varBoostedError}, any attempt to optimize with respect to $\theta$ is fruitless, because there appears to be no value for $\theta\in [0,1)$ that yields low variance. 

Third, in the last row of the table, it was not possible to induce the event
$\{S>\gamma,M<\gamma\}$ no matter what the value of $\theta$. In other words, $\{S>\gamma,M<\gamma\}$ remains a rare-event for all values of $\theta\in [0,1)$, and with very high probability $\hat\ell_\mathrm{ISVE}=\hat\ell_1+\hat\ell_2=\hat\ell_1$. Thus, despite the vanishing relative error property of the ISVE estimator, its   performance deteriorates as $\gamma$ becomes smaller and smaller to the point that it does not deliver meaningful estimates.

The next example in Table~\ref{tab:right tail}  suggests that our estimator remains robust even 
in very high dimensions of up to $d=60$.


\begin{table}[H]
\centering
\caption{Performance for $d=60$, $n=10^6$, $\v \nu = \v 0$,
$\Sigma=0.5\times\v 1\v 1^\top +0.5\times\m I$.} 
\begin{tabular}{c|c|c|c|c|c|c}
\multicolumn{3}{c}{} & \multicolumn{2}{|c}{relative error \%}& \multicolumn{2}{|c}{work normalized rel. var.}  \\
 \hline
 $\gamma$ & $\widehat{\ell}$ &   $\widehat{\ell}_{\rm ISVE}$&${\rm RE}(\widehat{\ell})$&${\rm RE}(\widehat{\ell}_{\rm ISVE})$& {WNRV}$(\widehat{\ell})$ & WNRV$(\widehat{\ell}_{\rm ISVE})$ \\ \hline 
600 & \num{1.98E-03} & \num{5.77E-07} & \num{0.837} & \num{51.6} & \num{1.02E-03} & \num{4.88e+00}  \\
900 & \num{2.81E-04} & \num{4.00E-10} & \num{0.893} & \num{15.4} & \num{1.20E-03} & \num{ 4.42e-01}  \\
1200 & \num{5.91E-05} & \num{4.16E-11} & \num{0.93} & \num{3.16} & \num{1.36E-03} & \num{ 1.75e-02}    \\ 
1500 & \num{1.57E-05} & \num{7.92E-12} & \num{0.964} & \num{1.23} & \num{1.52E-03} & \num{ 2.56e-03}  \\
1800 & \num{5.01E-06} & \num{2.01E-12} & \num{0.987} & \num{1.50} & \num{1.57E-03} & \num{ 3.41e-03}    \\ 
2100 & \num{1.79E-06} & \num{6.05E-13} & \num{1.012} & \num{0.0543} & \num{2.03E-03} & \num{ 5.10e-06} \\
2400 & \num{7.18E-07} & \num{2.12E-13} & \num{1.029} & \num{2.84E-03} & \num{1.56E-03} & \num{ 1.11e-08} \\
2700 & \num{3.08E-07} & \num{8.30E-14} & \num{1.046} & \num{8.93E-04} & \num{1.63E-03} & \num{ 1.15e-09}  \\ 
3000 & \num{1.44E-07} & \num{3.54E-14} & \num{1.057} & \num{6.82E-04} & \num{2.02E-03} & \num{ 5.97e-10}  \\
3300 & \num{7.02E-08} & \num{1.63E-14} & \num{1.069} & \num{8.30E-04} & \num{2.05E-03} & \num{9.60E-10}
\end{tabular}
\label{tab:right tail}
\end{table}

\subsubsection{Comparison with Modified Asmussen-Kroese estimator}
In addition to the ISVE estimator,   the \emph{modified Asmussen-Kroese}  (MAK) estimator \cite[Equation 3.6]{kortschak2013efficient} also enjoys the  vanishing relative error property.
In comparing 
 \eqref{stratified est} with the MAK estimator, we make the following observations.

First, the MAK estimator requires the solution of a non-linear equation for
every replication. This aspect of the estimator poses nontrivial problems: (1) sometimes no solution exists; (2)  Newton's method may take many iterations to converge, making  the running time of the estimator large.

Second, while our estimator
 \eqref{stratified est} was shown to be second-order efficient, ensuring  reliable estimation of its precision, no such efficiency result is provided for the MAK estimator, and
in numerical experiments we sometimes observed significant underestimation of the true variance of the MAK estimator. 

Third, observe that the MAK estimator reduces to
the Asmussen-Kroese (AK) estimator \eqref{AKest} in the independent case:  
$\v\nu=\nu\v 1, \Sigma=\sigma^2\m I$. Table~\ref{tab:var boost}  shows that when   $\sigma$ is small
 our estimator can outperform 
the (modified) Asmussen-Kroese estimator by orders of magnitude.
For example, note how for $\gamma\in[42,51]$ the AK estimator severely underestimates the true probability by as much as an order of $10^{-4}$.
Interestingly,   the 
 Asmussen-Kroese estimator has superior and unrivaled performance only in cases with larger $\sigma$, say $\sigma\geq 1$.

Finally, Table~\ref{tab:MAK} below confirms that the MAK estimator inherits the  poor performance of the Asmussen-Kroese estimator 
for small $\sigma$. In this particular example we use
 \[
d=10,\;
\v\nu=\v 0,\;\rho=0.2,\;\Sigma=0.25^2(\rho\v 1\v 1^\top+(1-\rho)\m I).
\]
Observe how for $\gamma\in[26,30]$ the MAK estimator underestimates the true probability by as much as an order of $10^{-3}$.
\begin{table}[H]
\caption{Comparative performance for $d=10,n=10^6,
\v\nu=\v 0,\rho=0.2,\Sigma=0.25^2(\rho\v 1\v 1^\top+(1-\rho)\m I)$.}
\label{tab:MAK}
\begin{center}
\begin{tabular}{c|c|c|c|c|c|c}  
\multicolumn{3}{c}{} & \multicolumn{2}{|c}{relative error \%}& \multicolumn{2}{|c}{work normalized rel. var.}  \\
 \hline
  $\gamma$ &  $\hat\ell_\mathrm{MAK}$& $\hat\ell$ & $\mathrm{RE}(\hat\ell_\mathrm{MAK})$  & $\mathrm{RE}(\hat\ell)$ &   $\mathrm{WNRV}(\hat\ell_\mathrm{MAK})$&   $\mathrm{WNRV}(\hat\ell)$ \\
\hline
					%
					  15 &   0.00195  &  0.00198  &    0.420  &   0.669   & 0.00531  & \num{4.15e-05}\\
           16  & 0.000373   &0.000370   &   0.660    &  0.724    & 0.0130   &\num{5.03e-05}\\
           17  & \num{6.47e-05}   &\num{6.47e-05}   &    1.07    &  0.775    & 0.0349   &\num{5.20e-05}\\
           18  & \num{1.00e-05}   &\num{1.02e-05}   &    1.80    &  0.823    &  0.096   &0.000102\\
           19  & \num{1.57e-06}   &\num{1.52e-06}   &    3.10    &   0.87    &   0.28   &\num{6.71e-05}\\
           20  & \num{2.02e-07}   &\num{2.15e-07}   &    5.60    &  0.937    &  0.941   &\num{8.09e-05}\\
           21  & \num{3.11e-08}   &\num{2.99e-08}   &     9.2    &   1.00    &   2.56   &0.000166\\
           22  & \num{3.80e-09}   &\num{3.91e-09}   &    15.9    &   1.06    &   7.58   &0.000108\\
           23  & \num{3.22e-10}   &\num{5.15e-10}   &    19.0    &   1.07    &   11.2   &0.000327\\
           24  & \num{5.63e-11}   &\num{6.61e-11}   &    44.7   &   1.14    &   61.57   &0.000123\\
           25  &  \num{6.09e-12}   &\num{8.42e-12}   &    42.0    &   1.18    &   55.0   & 0.00036\\
           26  & \num{4.63e-13}   &\num{1.05e-12}   &    71.6    &   1.23    &   159   &0.000197\\
           27  & \num{1.90e-14}   &\num{1.33e-13}   &    31.5   &   1.40   &   30.0  &0.000176\\
           28  & \num{4.85e-15}   &\num{1.69e-14}   &    60.0    &   1.49    &   110.   &0.000187\\
           29  & \num{9.12e-17}   &\num{2.15e-15}   &    60.4    &    1.5    &   113.34   &0.00021378\\
           30  & \num{6.37e-18}   &\num{2.74e-16}   &    53.1    &   1.54    &   87.585   &0.00018874
\end{tabular}
\end{center}
\end{table}

\section{Conclusions}

We have presented new methodology for the accurate estimation
of the cdf, pdf, and tails of the SLN distribution. 
In all three cases the new methodology yields  estimators that can tackle parameter settings which are currently intractable  with existing methods. 
In the cdf and pdf cases, the proposed estimators permit  additional variance 
reduction via Quasi Monte Carlo.
In the right-tail case, the new exponentially tilted estimator is shown to be, not only logarithmically or weakly efficient, but also second-order efficient.
This permits us to have greater confidence in all error estimates derived from simulation. 

 One of the crucial observations we can draw from a number of numerical 
experiments is that sometimes a strongly efficient estimator ($\hat\ell_\mathrm{ISVE},\hat\ell_\mathrm{MAK}$) may not 
necessarily exhibit low variance in practical simulations, and may be bettered
by a much simpler weakly efficient estimator.

In subsequent work, we will explore using the sequential sampling ideas in Section~\ref{sec:cdf} for the estimation of the distribution of an iid sum of random variables with an arbitrary (be it light- or heavy-tailed) distribution.






\appendix
\section*{Appendix}

\subsection*{Proof of Theorem~\ref{theorem:VRE left}}
\begin{proof}
Under the assumption
that $\Sigma_{11}<\Sigma_{1j}$ for $j\not =1$, we have $\ell(\gamma)\simeq \bb P(X_1<\gamma)$, see \cite{gulisashvili2016tail} for a proof. Therefore, using the upper bound
\[
\bb E_{\v 0}\hat\ell^2_{\v 0}(\gamma)=\bb E_{\v 0} \prod_{j=1}^d \Phi^2(\alpha_j(Z_1,\ldots,Z_{j-1}))\leq \bb E \Phi^2(\alpha_1)=[\bb P(X_1<\gamma)]^2\simeq \ell^2(\gamma),
\]
we have as $\gamma\downarrow 0$,
\[
\frac{\var_{\v 0}(\hat\ell_{\v 0})}{\ell^2}= \frac{\bb E_{\v 0}\hat\ell^2_{\v 0}(\gamma)}{\ell^2(\gamma)}-1\leq \frac{[\bb P(X_1<\gamma)]^2}{\ell^2(\gamma)}-1\rightarrow 0
\]
\end{proof}

\subsection*{Proof of Theorem~\ref{theorem:left tail}}

\begin{proof}
 To proceed with the proof we recall the following three facts. 
First, note that
$\ell(\gamma)=\bb P(\v 1^\top\exp(\v Y)\leq \gamma)$, where 
 $\v Y=\v\nu+\m L\v Z$. 
Using Jensen's inequality, we have that for any $\v w\in \c W$:
\begin{equation}
\label{Jensen}
\begin{split}
\ell(\gamma)&=\textstyle\bb P( \v w^\top\exp(\v Y-\ln \v w) \leq \gamma)\leq \textstyle\bb P( \v w^\top\ln(\v w)-\v w^\top\v Y  \geq -\ln\gamma)\\
&\leq\textstyle \overline\Phi\left( \frac{\v w^\top\v\nu-\ln\gamma-\v w^\top\ln \v w}{\sqrt{\v w^\top\Sigma\v w}}\right)\leq  \exp\left(-\frac{(\v w^\top\v\nu-\ln\gamma-\v w^\top\ln \v w)^2}{2\v w^\top\Sigma\v w}\right)
\end{split}
\end{equation}
Second,  denote 
$\bar{\v w}=\argmin_{\v w\in \c W} \v w^\top \Sigma \v w$ and
 the set  $\c C_\gamma\equiv\{\v z: \v 1^\top \exp(\m L \v z+\v \nu)\leq \gamma\}$.
Then, we have the asymptotic formula, proved in \cite[Formulas (13) and (63)]{gulisashvili2016tail}:
\begin{equation}
\label{asymp1}
\ln\ell^2(\gamma)\simeq c_1-\frac{(\ln(\gamma)-\bar{\v w}^\top\v \nu+\bar{\v w}^\top\ln\bar{\v w})^2}{\bar{\v w}^\top\Sigma\bar{\v w}}-(1+d)\ln(-\ln\gamma),\qquad \gamma\downarrow 0,
\end{equation}
where $c_1$ is a constant, independent of $\gamma$.  
Thirdly, 
consider the nonlinear optimization
\begin{equation}
\label{mubarprog}
\bar{\v\mu}=\argmin_{\v\mu}\left\{\|\v\mu\|^2-\frac{(\ln(\gamma)-\bar{\v w}^\top(\v \nu-\m L\v\mu)+\bar{\v w}^\top\ln\bar{\v w})^2}{2\bar{\v w}^\top\Sigma\bar{\v w}}\right\}
\end{equation}
with explicit solution 
\begin{equation}\label{mubar}
\bar{\v\mu}=\frac{\ln\gamma-\bar{\v w}^\top\v \nu+\bar{\v w}^\top\ln\bar{\v w}}{\bar{\v w}^\top\Sigma\bar{\v w}}\;\m L^\top \bar{\v w}
\end{equation}
Then,  we obtain the following bound on the second moment:
\[
\begin{split}
\bb E_{\v\mu^*}\hat\ell^2(\gamma)=\bb E_{\v\mu^*} \exp(2\psi(\v Z;\v\mu^*))&=
\bb E \exp(\psi(\v Z;\v\mu^*))\bb I\{\v Z\in \c C_\gamma\}\\
&=\textstyle\bb E \exp(\|\v\mu^*\|^2_2) \bb I\{(\v Z-\v\mu^*)\in \c C_\gamma\} \prod_j\overline\Phi(\mu_j^*-\alpha_j(\v Z-\v\mu^*))\\
&\leq \textstyle \exp(\|\v\mu^*\|^2_2)\bb P((\v Z-\v\mu^*)\in \c C_\gamma)\\
\textrm{using \eqref{Jensen} }&\leq \textstyle \exp(\|\v\mu^*\|^2_2) \overline\Phi\left( \frac{(\v\nu-\m L\v\mu^*)^\top\v w^*-\ln\gamma-(\v w^*)^\top\ln \v w^*}{\sqrt{(\v w^*)^\top\Sigma\v w^*}}\right) \\
\textrm{via \eqref{Jensen}+\eqref{mubarprog} }&\leq  \textstyle \exp\Big(\|\bar{\v\mu}\|^2_2
-\frac{(\bar{\v w}^\top(\v\nu-\m L\bar{\v\mu})-\ln\gamma-\bar{\v w}^\top\ln \bar{\v w})^2}{2\bar{\v w}^\top\Sigma\bar{\v w}}
\Big)
\end{split}
\]
By
substituting \eqref{mubar} in the last line, we obtain the upper bound
\[
\bb E_{\v\mu^*}\hat\ell^2\leq  \textstyle\exp\left(-\frac{(\ln\gamma-\bar{\v w}^\top\v \nu+\bar{\v w}^\top\ln\bar{\v w})^2}{\bar{\v w}^\top\Sigma\bar{\v w}}\right)
\]
In other words, from \eqref{asymp1} we deduce that
\[
\frac{\bb E_{\v\mu^*}\hat\ell^2(\gamma)}{\ell^2(\gamma)}= \c O( (-\ln\gamma)^{(d+1)}) ,\qquad \gamma\downarrow 0 
\]
and therefore
\[
\liminf_{\gamma\downarrow 0} \frac{\ln \bb E_{\v\mu^*}\hat\ell^2(\gamma)}{\ln\ell(\gamma)}=2,
\]
which implies that the algorithm is  logarithmically efficient with respect to  $\gamma$. 
\end{proof}

\subsection*{Proof of Theorem~\ref{proposition}}
\begin{proof}
Let  $N\idef \sum_{i=1}^d\bb I{\{X_i>\gamma\}},$ so that $\ell_1(\gamma)=\bb P(N\geq 1)\simeq \ell_\mathrm{as}$ and
 the residual 
\[\textstyle
r(\gamma)\idef\ell_\mathrm{as}-\ell_1(\gamma)=\sum_{i<j}\bb P(X_i>\gamma,X_j>\gamma)+o\left(\sum_{i<j}\bb P(X_i>\gamma,X_j>\gamma)\right).
\]
Note that $\bb P(N>1)=\Theta(r(\gamma))$ and $\bb P_g(N=1)=\bb P(N=1)/\ell_\mathrm{as}(\gamma)=\Theta( 1)$, where $g$ is the mixture density defined in \eqref{mixture}. We thus obtain 
\[
\begin{split}
\bb E_g\left|\hat\ell_1-\ell_1(\gamma)\right|^m&=
\sum_{j=1}^d\bb E_g \left[\left|\hat\ell_1-\ell_1(\gamma)\right|^m\,\bb I{\{N=j\}}\right]\\
&=|\ell_\mathrm{as}(\gamma)-\ell_1(\gamma)|^m\bb P_g(N=1)+\sum_{j=2}^d\left|\frac{\ell_\mathrm{as}(\gamma)}{j}-\ell_1(\gamma)\right|^m\bb P_g(N=j)\\
&=r^m(\gamma)\bb P_g(N=1)+\Theta(\ell_\mathrm{as}^m) \bb P_g(N>1)\\
&=r^m(\gamma)\bb P_g(N=1)+\Theta(\ell_\mathrm{as}^{m-1}) \bb P(N>1)\\
&=\Theta\left(r^m(\gamma)\right)+\Theta\left(\ell_\mathrm{as}^{m-1} r(\gamma)\right).
\end{split}
\]
Therefore, since $r(\gamma)=o(\ell_\mathrm{as}(\gamma))$,  we have:
\[
\begin{split}
n\var(S_n^2)&=\bb E_g(\hat\ell_1-\ell_1(\gamma))^4+\left(\frac{2}{n-1}-1\right)\var^2(\hat\ell_1)
\\
&=\Theta(r^4)+\Theta(\ell_\mathrm{as}^{3}(\gamma) r(\gamma))+\Theta(\var^2(\hat\ell_1))\\
&=\Theta(\ell_\mathrm{as}^{3}(\gamma) r(\gamma))+\Theta(\var^2(\hat\ell_1)),
\end{split}
\]
and
\[
\var^2(\hat\ell_1)=\Theta(r^4)+\Theta(\ell_\mathrm{as}(\gamma) r^3(\gamma))+\Theta(\ell_\mathrm{as}^{2}(\gamma)r^2(\gamma))=\Theta(\ell_\mathrm{as}^{2}(\gamma)r^2(\gamma))
\]
 Therefore, the relative error is
$
\var(S_n^2)/\var^2(\hat\ell_1)=\Theta(\ell_\mathrm{as}(\gamma)/r(\gamma))
$. By Lemma~\ref{lem3} there exists an $\alpha>1$ such that
\[\textstyle
\frac{r(\gamma)}{\ell_\mathrm{as}(\gamma)}=
\frac{r(\gamma)}{\ell_\mathrm{as}(\gamma^\alpha)}\times\frac{\ell_\mathrm{as}(\gamma^\alpha)}{\ell_\mathrm{as}(\gamma)}=o(1)\times\c O\left(\exp(-\frac{(\alpha^2-1)\ln^2(\gamma)}{2\sigma^2})\right),
\]
which shows that $\frac{\ell_\mathrm{as}(\gamma)}{r(\gamma)}$ grows at least at the exponential rate $\exp(\frac{(\alpha^2-1)\ln^2(\gamma)}{2\sigma^2})$.
\end{proof}

\subsection*{Proof of Lemma~\ref{lem}}
\begin{proof} First we show 1. To this end, recall that $\v X=\exp(\v Y)$, where $\v Y\sim\mathsf{N}(\v\nu,\Sigma)$. Further, recall the well-known property (which is strengthened in Lemma~\ref{lem3}) that for $i\not=j$ and
$\mathrm{Corr}(Y_i,Y_j)<1$, the pair $Y_i,Y_j$ is asymptotically independent in the sense that
\[
\bb P(Y_i>\gamma|Y_j>\gamma)=o(1),\qquad \gamma\uparrow\infty.
\]
In fact, Lemma~\ref{lem3} shows that this decay to zero is exponential.
The consequences of this are 
$
\bb P(\max_i Y_i>\gamma)\simeq \sum_i \bb P(Y_i>\gamma)
$
and
\[
\bb P(Y_k>\gamma,\max_{i\not =k} Y_i>\gamma)=o(\bb P(Y_k>\gamma))
\]
With these properties, we then have the lower bound:
\[
\begin{split}
\bb P(S>\gamma,X_k=M)&\geq \bb P(X_k=M>\gamma)\\
&\geq \bb P(X_k>\gamma,\max_{j\not=k}X_j<\gamma)\\
&=\bb P(Y_k>\ln\gamma,\max_{j\not=k}Y_j<\ln\gamma) \\
&=\bb P(Y_k>\ln\gamma)+o(\bb P(Y_k>\ln\gamma))\\
&=\bb P(X_k>\gamma)+o(\bb P(X_k>\gamma))
\end{split}
\]
Next, using the result $\bb P(S>\gamma,X_k=M<\ln\gamma)=o(\bb P(X_k>\ln\gamma))$ from Lemma~\ref{Lem 2}, we also have the analogous upper bound:
\[
\begin{split}
\bb P(S>\gamma,X_k=M)&= \bb P(X_k=M>\gamma)+\bb P(S>\gamma,X_k=M<\gamma)\\
&\leq \bb P(X_k>\gamma)+\bb P(S>\gamma,X_k=M<\gamma)\\
&= \bb P(X_k>\gamma)+o(\bb P(X_k>\gamma)),
\end{split}
\]
whence we conclude that $\bb P(S>\gamma,X_k=M)\simeq \bb P(X_k>\gamma)$. 

Next, we show point 2. Using the facts that: (1)
 the fewer the active constraints in any solution, the closer its minimum is to zero  (without constraints the minimum of \eqref{mu} is zero); (2) any solution   satisfies the \emph{Karush-Kuhn-Tucker} (KKT) necessary conditions:
\[
\begin{split}
\Sigma^{-1}\v\mu -\lambda_1\nabla g_1(\v\mu)-\lambda_2 \nabla g_2(\v\mu)
&=\v 0\\
\v\lambda\geq\v 0,\quad \v g(\v\mu)\geq \v 0,\quad\v\lambda^\top \v g(\v\mu) &=0,
\end{split}
\]
we can verify by direct substitution that $\v\mu^*$ satisfies the KKT conditions asymptotically as $\gamma\uparrow\infty$ and that it causes only one
constraint to be active ($g_1(\v\mu^*)=o(1)$). Moreover, it yields the asymptotic
 minimum: 
\[
\frac{1}{2}(\v\mu^*)^\top\Sigma^{-1}\v\mu^*=
\frac{(\ln(\gamma)-\nu_k)^2}{2\sigma_k^4}\v e_k^\top\Sigma\Sigma^{-1}\Sigma\v e_k=\frac{(\ln(\gamma)-\nu_k)^2}{2\sigma_k^2}
\]
Finally, we show point 3, which is the linchpin of the proposed methodology. To this end, consider the $(m+1)$-st moment with  $\v\mu\rightarrow\v\mu^*$ as $\gamma\uparrow\infty$:
\[
\begin{split}
\bb E_{\v\mu} \hat\hbar^{m+1}_k=\bb E_{\v 0} \hat\hbar^{m}_k&=\textstyle\bb E \exp\left(\frac{m\v\mu^\top\Sigma^{-1}\v\mu}{2}-m\v\mu^\top \Sigma^{-1}(\v Y-\v \nu)\right)\bb I\{S>\gamma, X_k=M\}\\
&=\textstyle \exp\left(\frac{(m^2+m)\v\mu^\top\Sigma^{-1}\v\mu}{2}\right)\bb P_{-m\v\mu}(S>\gamma, X_k=M)\\
&\simeq\textstyle \exp\left(\frac{(m^2+m)(\ln(\gamma)-\nu_k)^2}{2\sigma_k^2}\right)\bb P_{-m\v\mu^*}(S>\gamma, X_k=M)\\
\end{split}
\]
Next, notice that the measure $\bb P_{-m\v\mu^*}$ is equivalent to first simulating 
\[
Y_k\sim\mathsf{N}(\nu_k-m(\ln(\gamma)-\nu_k),\sigma_k^2)
\] and then, given $Y_k=y_k$, simulating all the rest of the components, denoted $\v Y_{-k}$, from the nominal Gaussian density $\phi_\Sigma(\v y-\v\nu)$ conditional on $Y_k=y_k$, that is, $\v Y_{-k}\sim
\phi_\Sigma(\v y-\v \nu|y_k)$. In other words, asymptotically, the effect of the change of measure induced by \eqref{mu} is to modify the marginal distribution of $X_k$ only.
Thus, repeating the same argument used to prove part 1, we have
\[
\bb P_{-m\v\mu^*}(S>\gamma, X_k=M)\simeq \bb P_{-m\v\mu^*}(Y_k>\ln\gamma)=\overline\Phi\left(\frac{(m+1)(\ln\gamma-\nu_k)}{\sigma_k}\right)
\]
Therefore, as $\gamma\uparrow\infty$, 
\begin{equation*}
\begin{split}
\bb E_{\v\mu} \hat\hbar^{m+1}_k&\textstyle\simeq  \exp\left(\frac{(m^2+m)(\ln(\gamma)-\nu_k)^2}{2\sigma_k^2}\right)\overline\Phi\left(\frac{(m+1)(\ln\gamma-\nu_k)}{\sigma_k}\right)\\
&=\textstyle\Theta\left( \frac{1}{\ln\gamma}\exp\left(-\frac{(m+1)(\ln(\gamma)-\nu_k)^2}{2\sigma_k^2}\right)\right)=\Theta(\ln^m(\gamma)\hbar_k^{m+1})
\end{split}
\end{equation*}
Then, the part 3 of Lemma~\ref{lem}  follows from putting $m=1$, and observing that
\[
\begin{split}
\frac{\var(\hat\hbar_k)}{\hbar_k^2}&=\frac{\bb E_{\v\mu}\hat\hbar_k^2}{[\bb P(S>\gamma,X_k=M)]^2}-1\simeq \frac{\bb E_{\v\mu}\hat\hbar_k^2}{[\bb P(X_k>\gamma)]^2}-1=\Theta(\ln(\gamma))
\end{split}
\]

\end{proof}


\begin{lem}\label{Lem 2}
We have 
$\bb P(S>\gamma,X_k=M<\gamma)=o(\bb P(X_k>\gamma))$
 as $\gamma\uparrow\infty$.
\end{lem}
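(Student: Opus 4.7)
The plan is to decompose the event using an intermediate threshold $\tau=\gamma^\beta$ with $\beta\in(0,1)$, splitting on whether the secondary maximum $M_{-k}:=\max_{j\ne k}X_j$ is small or large. Writing
\[
\{S>\gamma,\,X_k=M<\gamma\}\;\subseteq\; A_1\cup A_2,
\]
with $A_1=\{X_k\in(\gamma-(d-1)\tau,\gamma)\}$ and $A_2=\{X_k=M,\,M_{-k}>\tau\}$, this inclusion holds because $M_{-k}\le\tau$ forces $\sum_{j\ne k}X_j\le(d-1)\tau$, and hence $S>\gamma$ implies $X_k>\gamma-(d-1)\tau$. I will then pick $\beta$ so that both $\bb P(A_1)$ and $\bb P(A_2)$ are $o(\bb P(X_k>\gamma))$.

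For $A_1$, a Mill's-ratio computation for the one-dimensional Gaussian tail of $Y_k=\ln X_k$ gives $\bb P(A_1)\simeq (d-1)(\tau/\gamma)(\ln\gamma/\sigma_k^2)\bb P(X_k>\gamma)$, which is $o(\bb P(X_k>\gamma))$ whenever $\tau\ln\gamma=o(\gamma)$, in particular for any $\beta<1$. For $A_2$, the condition $X_k=M\ge M_{-k}>\tau$ together with a union bound yields
\[
\bb P(A_2)\;\le\;\sum_{j\ne k}\bb P(Y_k>\ln\tau,\,Y_j>\ln\tau,\,Y_j\le Y_k),
\]
and I would evaluate each summand by conditioning on $Y_k=y$ and integrating the Gaussian tail of $Y_j\mid Y_k=y$ truncated at $y$. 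In both the regime $\rho_{kj}<\sigma_k/\sigma_j$ (where the truncation is asymptotically inactive) and the regime $\rho_{kj}>\sigma_k/\sigma_j$ (where it absorbs the conditional mean shift), the Laplace asymptotic has the same leading exponent $-(\ln\tau)^2 Q_{kj}/(2(1-\rho_{kj}^2))$ with $Q_{kj}=1/\sigma_k^2-2\rho_{kj}/(\sigma_k\sigma_j)+1/\sigma_j^2$. The identity $Q_{kj}\sigma_k^2=(1-\rho_{kj}^2)+(\rho_{kj}-\sigma_k/\sigma_j)^2$ then shows $Q_{kj}\sigma_k^2/(1-\rho_{kj}^2)\ge 1$, with strict inequality in the generic case $\rho_{kj}\ne\sigma_k/\sigma_j$, so any $\beta$ in the nonempty interval $\bigl(\max_{j\ne k}\sqrt{(1-\rho_{kj}^2)/(Q_{kj}\sigma_k^2)},\,1\bigr)$ makes the $A_2$ exponent strictly dominate $(\ln\gamma)^2/(2\sigma_k^2)$, so that $\bb P(A_2)=o(\bb P(X_k>\gamma))$.

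The main obstacle is the non-generic edge case $\rho_{kj}=\sigma_k/\sigma_j$ for some $j\ne k$, where the two Laplace exponents coincide and the admissible interval for $\beta$ collapses. The remedy is a refined saddle-point expansion of the $A_2$ integral: the binding constraint $Y_j\le Y_k$ concentrates the integrand on the diagonal $Y_j\approx Y_k$ and produces an additional polynomial-in-$\ln\tau$ decay factor that, combined with $\beta\to 1^-$, still yields $\bb P(A_2)=o(\bb P(X_k>\gamma))$. All remaining steps are routine Gaussian tail estimates.
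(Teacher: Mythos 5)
Your decomposition and your treatment of $A_1$ are essentially the paper's: it splits at the same threshold $\tau=\gamma^\beta$ according to the size of $M_{-k}$, and disposes of the band $\gamma-(d-1)\tau<X_k<\gamma$ by the same Mills-ratio estimate. Where you diverge is $A_2$: the paper discards the ordering constraint and bounds $\bb P(M_{-k}>\tau,\,X_k>\tau)$ by a separate bivariate Gaussian-tail lemma (its Lemma~3, proved via a quadratic-programming/active-constraint argument), whereas you keep $Y_j\le Y_k$ and run the Laplace analysis directly. That is a legitimate alternative, but two of your quantitative claims are wrong. First, the leading exponent of $\bb P(Y_k\ge Y_j\ge\ln\tau)$ is \emph{not} always $(\ln\tau)^2Q_{kj}/(2(1-\rho_{kj}^2))$: when $\sigma_k>\sigma_j$ and $\rho_{kj}\sigma_k>\sigma_j$ the minimizing configuration is $Y_j\approx\ln\tau$, $Y_k\approx\rho_{kj}(\sigma_k/\sigma_j)\ln\tau$, which already satisfies $Y_k\ge Y_j$, so the exponent is $(\ln\tau)^2/(2\sigma_j^2)$, strictly smaller than your corner exponent. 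Hence your admissible interval for $\beta$ is too generous (one needs $\beta>\sigma_j/\sigma_k$ there); the conclusion survives in that sub-case only after enlarging $\beta$, so this slip is repairable.

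The genuine gap is the edge case $\rho_{kj}=\sigma_k/\sigma_j$ (equivalently $\Sigma_{kj}=\Sigma_{kk}$, which forces $\sigma_j>\sigma_k$): it cannot be rescued by ``a polynomial factor plus $\beta\to1^-$.'' There, conditionally on $Y_k=y$ the law of $Y_j$ is centred at $y+O(1)$, so $\bb P(Y_k\ge Y_j\ge\ln\tau)=\Theta\bigl(\bb P(Y_k>\ln\tau)/\ln\tau\bigr)$; making this $o(\bb P(X_k>\gamma))$ forces $\ln\gamma-\ln\tau\to0$, which is incompatible with the requirement $\tau\ln\gamma=o(\gamma)$ that your $A_1$ bound needs---no choice of $\tau$, fixed or $\gamma$-dependent, satisfies both. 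Worse, no refinement of the $A_2$ estimate can close this, because the statement itself fails in that case: take $d=2$, $Y_2=Y_1+Z$ with $Z$ independent and nondegenerate, and $k=1$; then $\{X_1\in(0.55\gamma,0.6\gamma),\,X_2\in(0.45\gamma,0.5\gamma)\}\subseteq\{S>\gamma,\,X_1=M<\gamma\}$ has probability of order $\exp\bigl(-(\ln\gamma+O(1))^2/(2\sigma_1^2)\bigr)$, exceeding $\bb P(X_1>\gamma)$ by a factor $\gamma^{c}$ with $c>0$. (In fairness, the paper's own route quietly assumes the generic regime as well: the step in its Lemma~3 asserting that the dropped constraint is active is valid only when $\rho<\min(\sigma_1,\sigma_2)/\max(\sigma_1,\sigma_2)$, and it is precisely covariances of Black--Scholes type, $\Sigma_{kj}=\Sigma_{kk}$ for $k<j$, that sit on this boundary.) So your generic-case argument is on par with the paper's, but the promised saddle-point fix of the degenerate case is not a routine step---it is a claim that cannot be made good.
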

\begin{proof} Let $\beta\in(0,1)$ and $M_{-k}=\max_{j\not=k}X_j$.
Then, using the facts: 
\[
\frac{\overline\Phi(\ln(\gamma-\gamma^\beta))}{\overline \Phi(\ln\gamma)}\simeq\exp\left(-\frac{\ln^2(\gamma-\gamma^\beta)-\ln^2(\gamma)}{2}\right)\frac{\gamma-\beta\gamma^\beta}{\gamma-\gamma^\beta}
\]
and 
\[
\ln^2(\gamma)-\ln^2(\gamma-\gamma^\beta)\simeq 2\frac{\ln(\gamma)}{\gamma^{1-\beta}}+o\left(\frac{\ln(\gamma)}{\gamma^{1-\beta}}\right),
\]
we obtain $\overline\Phi(\ln(\gamma-\gamma^\beta))\simeq \overline\Phi(\ln\gamma)$ for any $\beta\in(0,1)$. More generally, 
\[
\bb P(\ln(\gamma-\gamma^\beta)\leq Y_k\leq \ln\gamma)=o(\bb P(Y_k>\ln\gamma)).
\]
Then, we have $\bb P(S>\gamma,X_k=M<\gamma)=$
\[
\begin{split}
&=\textstyle\bb P(M_{-k}>\gamma^\beta,S>\gamma,X_k=M<\gamma)+\bb P(M_{-k}<\gamma^\beta,S>\gamma,X_k=M<\gamma)\\
&\leq \textstyle\bb P(\gamma^\beta<M_{-k}<X_k<\gamma)+\bb P(M_{-k}<\gamma^\beta,\;\gamma-(d-1)M_{-k}<X_k<\gamma)\\
&\leq\textstyle \bb P(\gamma^\beta<M_{-k},\;\gamma^\beta<X_k)+\bb P(\gamma-(d-1)\gamma^\beta<X_k<\gamma)
\end{split}
\]
Since for large enough $\gamma$ there exists a $\beta'\in(\beta,1)$ such that $(d-1)\gamma^\beta<\gamma^{\beta'}$, we have
 \[
\bb P(\gamma-(d-1)\gamma^\beta<X_k<\gamma)\leq \bb P(\gamma-\gamma^{\beta'}<X_k<\gamma)=o(\bb P(X_k>\gamma))
\]
The proof will then be complete if we can find a $\beta\in(0,1)$, such that ($u=\ln\gamma$)
\[
\bb P(M_{-k}>\gamma^\beta,X_k>\gamma^\beta)=
\bb P(\max_{j\not=k}Y_j>\beta u,Y_k>\beta u)=o(\bb P(Y_k>u))
\]
Since $\bb P(\max_{j\not=k}Y_j>\beta u,Y_k>\beta u)=\c O\left(\sum_{j\not=k}\bb P(Y_j>\beta u,Y_k>\beta u)\right)$, the last is equivalent to showing that the bivariate  normal probability 
$
\bb P(Y_j>\beta u,Y_k>\beta u)=o(\bb P(Y_k>u))
$
for some $\beta\in(0,1)$. This last part then follows from Lemma~\ref{lem3}, which completes the  proof.
\end{proof}

\begin{lem}[Gaussian Tail Probability]
\label{lem3} 
Let $Y_1\sim \mathsf{N}(\nu_1,\sigma_1^2)$ and 
$Y_2\sim \mathsf{N}(\nu_2,\sigma_2^2)$ be jointly
 bivariate normal with correlation coefficient $\rho\in(-1,1)$. Then, there exists an $\alpha>1$ such that
\[
\bb P(Y_1>\gamma,Y_2>\gamma)=o(\bb P(Y_1>\alpha\gamma) \wedge \bb P(Y_2>\alpha\gamma)),
\]
where $a\wedge b$ stands for $\min\{a,b\}$.
\end{lem}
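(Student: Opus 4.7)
The plan is to extract the exponential decay rate of the joint Gaussian tail via constrained quadratic minimization, and then compare it with the two marginal tail rates at the inflated threshold $\alpha\gamma$.  First I would center: since $\nu_1,\nu_2$ are constants and $\gamma\uparrow\infty$, replacing $Y_i$ by $Y_i-\nu_i$ perturbs each threshold by $O(1)$ and leaves the leading exponential behavior unchanged, so I may assume mean zero.  Standard multivariate Gaussian Laplace asymptotics (or a direct steepest-descent evaluation of $\iint_\gamma^\infty \phi_\Sigma$) then yield
\[
\bb P(Y_1>\gamma,Y_2>\gamma)=\exp\bigl(-\tfrac{1}{2}Q^*(\gamma)+O(\ln\gamma)\bigr),\qquad Q^*(\gamma)\idef\min\{\v y^\top\Sigma^{-1}\v y:y_1,y_2\geq\gamma\},
\]
and by homogeneity $Q^*(\gamma)=\kappa\gamma^2$.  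Since $\bb P(Y_i>\alpha\gamma)$ has exponential rate $\alpha^2\gamma^2/(2\sigma_i^2)$, the minimum of the two marginal tails has rate $\alpha^2\gamma^2/(2\min(\sigma_1,\sigma_2)^2)$, so the claim reduces to producing $\alpha>1$ with $\kappa\min(\sigma_1,\sigma_2)^2>\alpha^2$, i.e., a strict gap $\kappa\min(\sigma_1,\sigma_2)^2>1$.

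Next I would solve the quadratic program by KKT, ordering coordinates so that $\sigma_1\leq\sigma_2$.  A complementary-slackness analysis gives two regimes.  In case (A), when $\rho\leq\sigma_1/\sigma_2$, both inequality constraints bind, the minimizer is $(\gamma,\gamma)$, and
\[
\kappa=\frac{\sigma_1^2+\sigma_2^2-2\rho\sigma_1\sigma_2}{\sigma_1^2\sigma_2^2(1-\rho^2)},\qquad \kappa\sigma_1^2=1+\frac{(\sigma_1-\rho\sigma_2)^2}{\sigma_2^2(1-\rho^2)}>1
\]
strictly whenever $\rho\neq\sigma_1/\sigma_2$.  In case (B), when $\rho>\sigma_1/\sigma_2$, only $y_1=\gamma$ binds at the optimum, the free coordinate is $y_2^*=\rho\sigma_2\gamma/\sigma_1>\gamma$, and plugging back one obtains $\kappa=1/\sigma_1^2$, hence $\kappa\sigma_1^2=1$ with no gap.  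In case (A) any $\alpha\in\bigl(1,\sqrt{\kappa}\sigma_1\bigr)$ discharges the claim, absorbing the sub-exponential $O(\ln\gamma)$ prefactor arising from the Laplace estimate.

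The main obstacle will be case (B).  A direct conditioning check confirms the difficulty: using $Y_2\mid Y_1=\gamma+u\sim\mathsf{N}(\rho\sigma_2(\gamma+u)/\sigma_1,\sigma_2^2(1-\rho^2))$ together with Mills's ratio shows $\bb P(Y_2>\gamma\mid Y_1=\gamma+u)\rightarrow 1$ uniformly for $u\geq 0$, so $\bb P(Y_1>\gamma,Y_2>\gamma)\simeq\bb P(Y_1>\gamma)$, and for any $\alpha>1$ the ratio $\bb P(Y_1>\gamma)/\bb P(Y_1>\alpha\gamma)$ diverges exponentially.  I would first try to sharpen the Laplace expansion to extract additional polynomial factors in $1/\gamma$, but polynomial corrections cannot close an exponential deficit.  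My expected resolution is therefore to supplement the hypothesis with $\rho\leq\min(\sigma_1,\sigma_2)/\max(\sigma_1,\sigma_2)$, under which case (A) is the only regime and the gap $\kappa\min(\sigma_1,\sigma_2)^2>1$ is strict.  Fortunately the downstream uses in Lemma~\ref{Lem 2} and Lemma~\ref{lem} only need $\bb P(Y_j>\gamma,Y_k>\gamma)=o(\bb P(Y_k>\alpha\gamma))$ for the particular target index $k$; combined with the convention $\sigma=\max_i\sigma_i$ fixed in Section~\ref{sec:right-tail}, every pair $(Y_j,Y_k)$ in those invocations falls into case (A) with $\sigma_k$ playing the role of $\sigma_2$, and the Step~3 computation then delivers the required $\alpha>1$.
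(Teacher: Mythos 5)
Your route is essentially the paper's own: compare the exponential rate of the joint tail, obtained from the constrained quadratic program, with the rate $\alpha^2\gamma^2/(2\min(\sigma_1,\sigma_2)^2)$ of the thinner marginal at level $\alpha\gamma$, and your KKT case analysis is correct. The substantive point is your case (B): when $\rho\,\max(\sigma_1,\sigma_2)>\min(\sigma_1,\sigma_2)$ the constraint on the larger-variance coordinate is inactive, your $\kappa$ satisfies $\kappa\min(\sigma_1,\sigma_2)^2=1$, and your conditioning argument correctly shows that $\bb P(Y_1>\gamma,Y_2>\gamma)$ is asymptotically of the order of the thinner marginal at level $\gamma$, so no $\alpha>1$ can work and the statement as written fails in that regime. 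The paper's proof does not avoid this: it asserts that dropping the constraint $y_1\geq\gamma-\nu_1$ removes an \emph{active} constraint because ``$\tilde y_1<\gamma-\nu_1$'', but the stationary point of the relaxed problem is $\tilde y_1=\rho\sigma_1(\gamma-\nu_2)/\sigma_2$, not $(\gamma-\nu_1)\rho\sigma_2/\sigma_1$ as written; with the paper's ordering $\sigma_1>\sigma_2$ this exceeds $\gamma-\nu_1$ precisely when $\rho>\sigma_2/\sigma_1$, in which case the two minima coincide and the claimed strict inequality degenerates to equality. So you have not missed an idea -- you have identified the regime in which the paper's own argument (and the lemma) breaks down, and your proof under the restriction $\rho<\min(\sigma_1,\sigma_2)/\max(\sigma_1,\sigma_2)$ is the correct salvage of the same approach.

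Two corrections to your patch, though. First, the added hypothesis must be strict: at $\rho=\min(\sigma_1,\sigma_2)/\max(\sigma_1,\sigma_2)$ your own case-(A) identity gives $\kappa\min(\sigma_1,\sigma_2)^2=1$, and conditioning shows $\bb P(Y_2>\gamma\mid Y_1=\gamma+u)$ stays bounded away from zero, so the joint tail is still $\Theta$ of the thinner marginal at level $\gamma$ and the conclusion fails at the boundary as well. Second, the claim that every downstream invocation falls into case (A) because ``$\sigma_k$ plays the role of the larger variance'' is not justified: in Lemma~\ref{Lem 2} and in parts 1 and 3 of Lemma~\ref{lem} the index $k$ runs over \emph{all} components of the decomposition, not only those attaining $\max_i\sigma_i$, so pairs with $\sigma_k<\sigma_j$ whose correlation $\rho$ satisfies $\rho\sigma_j\geq\sigma_k$ do occur for a general $\Sigma$; for such pairs the inequality $\bb P(Y_j>\beta u,Y_k>\beta u)=o(\bb P(Y_k>u))$ invoked in Lemma~\ref{Lem 2} is false (the left-hand side is of the order of $\bb P(Y_k>\beta u)$), so those results require the same kind of condition, or an argument that retains the ordering constraint $X_k=M$ rather than discarding it. Your observation is correct for pairs with $\sigma_k\geq\sigma_j$: there $\kappa\sigma_k^2\geq\sigma_k^2/\sigma_j^2>1$ when $\sigma_k>\sigma_j$, and $\kappa\sigma_k^2=2/(1+\rho)>1$ when $\sigma_k=\sigma_j$, with no restriction on $\rho$.
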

\begin{proof} Without loss of generality, we may assume that $\sigma_1>\sigma_2$, so that
\[
\bb P(Y_1>\alpha\gamma) \wedge \bb P(Y_2>\alpha\gamma)\simeq\bb P(Y_2>\alpha\gamma)=\textstyle\Theta(\gamma^{-1}\exp(-\frac{(\alpha\gamma-\nu_2)^2}{2\sigma_2^2}))
\]
 Define the convex quadratic program:
\begin{equation}
\label{qpp}
\begin{split}
\min_{\v y}&\;\frac{1}{2}\v y^\top\Sigma^{-1}\v y\\
\textrm{subject to: }&\v y\geq \gamma\v 1-\v\nu,
\end{split}
\end{equation}
where $\Sigma_{11}=\sigma_1^2,\Sigma_{12}=\Sigma_{21}=\rho\sigma_1\sigma_2,\Sigma_{22}=\sigma_2^2$. Denote the solution as $\v y^*$. Then, we have the following asymptotic result \cite{hashorva2003multivariate}:
\[\textstyle
\bb P(Y_1>\gamma,Y_2>\gamma)=\Theta\left( \gamma^{-d_1}\exp\left(-\frac{(\v y^*)^\top\Sigma^{-1}\v y^*}{2}\right)\right),
\]
where $d_1\in\{1,2\}$ is the number of active constraints in \eqref{qpp}. 
Next, consider the quadratic programing problem which is the same as \eqref{qpp}, except that we  drop the first constraint (that is, we drop $y_1\geq\gamma-\nu_1$). The minimum of this second quadratic programing problem is  $\frac{(\gamma-\nu_2)^2}{2\sigma_2^2}$, and is achieved at the point
$
\tilde{\v y}=
\left((\gamma-\nu_1)\rho\sigma_2/\sigma_1,\gamma-\nu_2
\right)^\top
$. Note that since $\tilde y_1<\gamma-\nu_1$, we have dropped an active constraint.
 Since dropping an active constraint in a convex quadratic minimization achieves an even lower minimum, we have the strict inequality between the minima of the two quadratic minimization problems:
\[
0<\frac{(\gamma-\nu_2)^2}{2\sigma_2^2}<\frac{(\v y^*)^\top\Sigma^{-1}\v y^*}{2}
\] 
for any large enough $\gamma>\nu_2$. Hence, after rearrangement of the last inequality, we have
\[
\frac{\nu_2+\sigma_2\sqrt{(\v y^*)^\top\Sigma^{-1}\v y^*}}{\gamma}>1,
\]
and therefore there clearly exists an $\alpha$ in the range
\[
1<\alpha<\frac{\nu_2+\sigma_2\sqrt{(\v y^*)^\top\Sigma^{-1}\v y^*}}{\gamma}
\]
For such an $\alpha$ (in the above range), we have
\[
\frac{(\alpha\gamma-\nu_2)^2}{2\sigma_2^2}<\frac{(\v y^*)^\top\Sigma^{-1}\v y^*}{2}
\]
Therefore,
$
\exp(-\frac{(\v y^*)^\top\Sigma^{-1}\v y^*}{2})=o\left(\exp(-\frac{(\alpha\gamma-\nu_2)^2}{2\sigma_2^2})\right)
,\; \gamma\uparrow\infty$,
and the exponential rate of decay
of $\bb P(Y_1>\gamma,Y_2>\gamma)$ is greater than  that of $\bb P(Y_2>\alpha\gamma)$.
This completes the proof.
\end{proof}
%
\section*{Acknowledgements}
Zdravko Botev has been supported by the Australian Research Council grant DE140100993. Robert Salomone has been supported by the Australian Research Council Centre of Excellence for
Mathematical \& Statistical Frontiers (ACEMS), under grant number CE140100049.

\bibliographystyle{plain}
\bibliography{lognorm}

\end{document}